\pgfplotsset{width=6.4cm}\pgfplotsset{compat=newest}
\begin{document}

\title{A Clustering-Based Combinatorial Approach to Unsupervised Matching of Product Titles}

%\titlerunning{Short form of title}        % if too long for running head

\author{Leonidas~Akritidis \and Athanasios~Fevgas \and Panayiotis~Bozanis \and Christos~Makris}

%\authorrunning{Short form of author list} % if too long for running head

\institute{F. Author \at
              first address \\
              Tel.: +123-45-678910\\
              Fax: +123-45-678910\\
              \email{fauthor@example.com}           %  \\
%             \emph{Present address:} of F. Author  %  if needed
           \and
           S. Author \at
              second address
}

\date{Received: date / Accepted: date}
% The correct dates will be entered by the editor

\maketitle

\begin{abstract}
The constant growth of the e-commerce industry has rendered the problem of product retrieval
particularly important. As more enterprises move their activities on the Web, the volume and the
diversity of the product-related information increase quickly. These factors make it difficult for
the users to identify and compare the features of their desired products. Recent studies proved
that the standard similarity metrics cannot effectively identify identical products, since similar
titles often refer to different products and vice-versa. Other studies employed external data
sources (search engines) to enrich the titles; these solutions are rather impractical mainly
because the external data fetching is slow. In this paper we introduce UPM, an unsupervised
algorithm for matching products by their titles. UPM is independent of any external sources, since
it analyzes the titles and extracts combinations of words out of them. These combinations are
evaluated according to several criteria, and the most appropriate of them constitutes the cluster
where a product is classified into. UPM is also parameter-free, it avoids product pairwise
comparisons, and includes a post-processing verification stage which corrects the erroneous
matches. The experimental evaluation of UPM demonstrated its superiority against the
state-of-the-art approaches in terms of both efficiency and effectiveness.
\keywords{product matching \and entity matching \and entity resolution \and clustering \and
unsupervised learning \and machine learning \and data mining}
\end{abstract}

\section{Introduction}
\label{sec-intro}

The online comparison of products is a crucial process, since it is usually the
first step in the life cycle of an electronic sale. Before a purchase is completed, the majority of
users search, collect and aggregate the characteristics both of the desired, and of any similar
products. For this reason, the role of the product comparison services has been rendered
increasingly important. These platforms retrieve data from various sources including electronic
stores, suppliers and reviews sites and they merge the information which refers to identical
products. In the sequel, they present this information to their users, allowing them to compare a
variety of parameters such as features and prices. They also facilitate the aggregation of user
opinions and reviews.

Since the products-related data originates from multiple sources, it presents a high degree of
diversity. To implement their comparison tools, the aggregation platforms must develop algorithms
which identify identical products. Apparently, the problem of product matching is vital for these
platforms, their users, and e-commerce industry in general.

Due to its importance, there exists a significant amount of research on this interesting problem.
The relevant literature includes solutions which can be divided into two categories: The first one
contains works which address the problem by examining solely the product titles. Earlier studies
employed standard string similarity methods including the cosine and edit-distance measures
\cite{sigmod2003,tkde2007,icde2007,irintro2008,tkdd2008,vldb2011,lcs2012,sigmod2013,ijca2013}. However,
\cite{cikm2012} showed that these metrics are inadequate on this particular problem; frequently,
identical products are described by very diverse titles, whereas highly similar titles do not
necessarily represent identical products.

For this reason, the method of \cite{cikm2012} employs Web search engines with the aim of enriching
the product titles with important missing words. For each title, the algorithm submits a query to a
Web search engine and, in the sequel, it collects and processes the returned results to identify
such words. A similar approach is introduced in \cite{vldb2014}, where the titles are modeled as
graphs and a clustering algorithm determines whether these graphs form a cohesive, or separately
clustered communities. However, the submission of a query in a search engine and the subsequent
processing of the returned results are expensive operations. Additionally, the provided APIs do not
allow unlimited usage and there is a limit to the number of the queries which can be submitted on a
daily basis. These limitations render these two approaches not applicable to large datasets with
millions of products.

The second category includes methods which take into consideration additional features such as
brands, manufacturers, categories, etc. More specifically, FEBRL provides an implementation based
on SVMs for learning suitable matcher combinations~\cite{febrl2008}, and MARLIN offers a set of
several learning methods such as SVMs and decision trees, combined with two similarity measures
\cite{sigkdd2003}. Nonetheless, these methods exhibit one significant problem: Since an aggregation
service is fed with data from multiple non controlled sources, many of the product attributes
which are present in one feed, may be absent in another. Even if an attribute is provided by all
sources, the data is frequently skewed or incomplete. In such occasions, it is inevitable that the
methods of this category will not perform well.

%Regarding the traditional entity resolution approaches, the comparative study of \cite{sigmod2003}
%concludes that the traditional supervised learning approaches result in low matching quality for
%products.

In this paper we present \textit{UPM (Unsupervised Product Matcher)}, an unsupervised algorithm
for matching products by their titles. The following list contains a brief description of the parts
of the algorithm and summarizes the contributions of this work:

\begin{itemize}
\item{UPM is based on the concept of unsupervised entity resolution via clustering. In details, it
constructs combinations of the words of the titles and assigns scores to each one of them,
similarly to \cite{inista2018}. The highest-scoring combination (called cluster) is the one which
best represents the identity of a product. All the products within the same cluster are considered
to be matching each other.}
\item{It performs morphological analysis of the product titles and identifies potentially useful
tokens (attributes, models, etc.). Each title is then split into virtual fields, and the tokens are
distributed to these fields according to their form and semantics.}
\item{It assigns scores to these fields and in the sequel, it plugs these scores into a function
which evaluates the combinations. This function also takes into consideration additional properties
of a combination, including its position in a title and its frequency.}
\item{It includes a post-processing verification stage which is executed after the formation of the
clusters. Based on the observation that very rarely a product appears twice within the catalog of a
vendor, this stage either moves products from one cluster to another, or it creates new clusters.
This stage leads to significant gains in the matching performance of the algorithm.}
\item{Unlike the aforementioned methods, our algorithm does not perform pairwise comparisons
between the products to determine whether they match or not. Therefore, it avoids the quadratic
complexity of this procedure, and also, it does not require the invention of an additional blocking
policy.}
\item{The following presentation introduces several parameters for UPM. However, there are global
settings for these parameters which consistently lead to satisfactory performance. The fixing
of these values ultimately leads to a method which is parameter-free.}
\end{itemize}

The rest of the paper is organized as follows: Section~\ref{sec-UPM} consists of six subsections
which describe the core parts of the algorithm. In particular, the first five present the
primary data structures and their construction method, the combinations scoring function, the
cluster selection strategy and the verification stage of UPM. Subsection \ref{ssec-tune} is
dedicated to the fixing of the various parameters. The experimental evaluation of the algorithm
is conducted in Section~\ref{sec-experiments} and the final conclusions are summarized in Section
\ref{sec-conclusion}. For research purposes, both the code we developed and the datasets we
utilized have been made publicly available on GitHub.

\section{Unsupervised Product Matching}
\label{sec-UPM}

Let us consider a set of vendors $V$ which includes electronic stores, suppliers, auction platforms
and so on. Each vendor $v \in V$ distributes an electronic catalog which contains the products s/he
provides, accompanied by some additional useful information. In case this information is organized
in a structured (or semi-structured) form, the catalog is called a \textit{feed} and the products
are stored as a collection of successive records. Each record is comprised of an arbitrary number
of attributes including its title, brand, model, and others.

Moreover, a vendor creates its feed independently of the others; hence, $v$ may provide information
about the brand or the category of a product, whereas $v'$ may not. Even if both $v$ and $v'$
include this information in their feeds, there may be discrepancies which inevitably lead to skewed
data.

Nevertheless, all feeds must contain at least one descriptive title for each included product. Two
or more vendors may use diverse titles to describe the same product. In the following subsections
we describe an unsupervised algorithm which matches products by overcoming this diversity.

\subsection{Combinations vs. n-grams}
\label{ssec-comb}

The string of a product title usually consists of multiple types of substrings, including words,
model descriptions, technical specifications, etc. We collectively refer to all these substrings as
\textit{tokens}. Let $W_t$ be the set of all tokens of a product title $t$. Then, a $k$-combination
$c_k$ is defined as any subset of $W_t$ of size $k$, without repetition and without care for tokens
ordering. For example, if $W_t$ consists of three tokens $W_t=\{w_1, w_2, w_3\}$, then there are
three possible 2-combinations, $\{w_1,w_2\},\{w_1,w_3\},\{w_2,w_3\}$, and one 3-combination,
$\{w_1,w_2,w_3\}$. In case $t$ consists of $l_t$ tokens (i.e., its length is $l_t$), then the
number of all possible $k$-combinations is equal to the binomial coefficient:

\begin{equation}
N(l_t,k)=\binom{l_t}{k}=\frac{l_t!}{k!(l_t-k)!}
\label{eq-num-comb}
\end{equation}

\noindent and the construction complexity $\forall k$ is exponential $O(2^{l_t})$.

Notice that $k$-combinations are different than $n$-grams: the latter are computed by sliding a
window of length $n$ over the examined string, from the left to the right; therefore, $n$-grams
capture only successive tokens. However, in a product title the important tokens (brand, model,
etc.) are usually scattered across the string and also, in non adjacent positions. Although the
construction of $k$-combinations is more expensive, they were preferred over $n$-grams because of
their ability to bring non-adjacent tokens together.

For example, there is no common 2-gram or 3-gram for the titles \textit{nVidia GeForce GTX1050 4GB}
and \textit{GeForce 4GB GTX1050}. Hence, $n$-grams cannot identify the similarity between these two
products. On the other hand, there are two common 2-combinations, namely \textit{GeForce GTX1050}
and \textit{GeForce 4GB}. Apparently, $k$-combinations outperform $n$-grams on this particular
problem.

Since the construction of all $k$-combinations is of exponential complexity, it is required to
limit their number to a minimum. Fortunately, our experiments showed that titles contain on average
6-11 tokens depending on the category of the product, and also, only a portion of them is important
for the identification of a product. For this reason, we limit the computations to the first
2-, 3-$,...,K$-combinations of the tokens of the involved titles.

Eventually, for a title which consists of $l_t$ tokens, the total number of combinations to be
computed is:

\begin{equation}
\Sigma N(l_t,K)=\sum_{k=2}^{k=K,k \leq {l_t}}\binom{l_t}{k}= \sum_{k=2}^{k=K,k \leq {l_t}} \frac{l_t!}{k!(l_t-k)!}
\label{eq-tnum-comb}
\end{equation}

For the sake of simplicity, in the presentation which follows we use the term ``combination''
instead of $k$-combination, and the simplified notation $c$ instead of $c_k$.

\subsection{Morphological Analysis \& Token Semantics}
\label{ssec-moran}

Each title consists of tokens which are not equally important for the description of a product.
Vendors may provide irrelevant information in a title, including payment facilities, special
discounts, offers, shipping and delivery data, availabilities and so on. Such kinds of information
are considered as noise; consequently, they may degrade the effectiveness of an entity matching
algorithm.

The unsupervised extraction of the hot tokens from a title is a particularly challenging task,
since vendors use different syntactical rules to express the information of their products and,
also, each product type presents its own specificity. Nevertheless, in this paper we perform
morphological analysis of the titles with the aim of identifying these hot tokens. In particular,
we initially examine the form of the tokens and we categorize each one of them as:

\begin{itemize}
\item{\textit{Mixed}, in case it contains both digits and letters, or}
\item{\textit{Numeric}, in case it contains only digits (with a thousands or a decimal separator),
or}
\item{\textit{Alphabetic}, in all other cases.}
\end{itemize}

In the sequel, we identify the following important pieces of information:

\textit{1) Product Attributes}: The attributes of a product are important, since they can be used
to differentiate it from another similar product. For instance, the \textit{32 GB} version of a
cell phone is a different product compared to the \textit{64 GB} version of the same model. This is
valid for multiple product types (e.g. hardware, electrical and electronic devices, etc.).

The process is based on a small lexicon of measurement units (e.g. bytes, hz, bps, meters, etc.)
and of their multiples and sub-multiples. By employing this lexicon, an attribute is identified
either i) when a pair of a numeric token and a measurement unit is encountered (e.g. \textit{32
GB}), or ii) when the ending of a mixed token is a measurement unit and its suffix consists
of digits only (e.g. \textit{32GB}). In the former case, the two tokens of the pair are concatenated
into one, with the aim of eliminating the difference with the latter case.

\textit{2) Models}: The model descriptors are the most fundamental part of a product title, since
they represent its identity. Unfortunately, the models may receive forms which vary significantly
among vendors, and moreover, a specific model may appear under different forms (e.g. \textit{PS 3}
vs. \textit{PS3} vs. \textit{Playstation3}, etc.). Consequently, it is particularly hard for an
unsupervised technique to correctly identify such model descriptors with absolute accuracy.

Nevertheless, the approach we present here yields significant improvements in the performance of
our matching algorithm. We consider that a token is a possible model descriptor if it is either
mixed, or numeric and it is not followed by a measurement unit. In addition, not all mixed tokens
are treated equally. For instance, the first mixed token in a product title is considered to be
more possible to contain a model compared to the second or the third mixed token.

In case a token does not fall into one of the above categories, then it is classified as a
\textit{normal} token. Table \ref{tab-toksem} summarizes the five aforementioned semantics
accompanied by their identification rules.

\begin{table}[t]
\begin{center}
\smaller
\caption{Identification rules of the token semantics}
\vspace{-0.2cm}
\label{tab-toksem}
\begin{tabular}{|p{0.6cm}|p{1.2cm}|p{5.7cm}|} \hline
{\bf Type} & {\bf Semantics} & {\bf Identification Rule/s}                                                                        \\\hline
     1     & Attribute       & i) numeric tokens followed by measurement units, or ii) mixed tokens  ending in a measurement unit \\\hline
     2     & Model           & The first mixed token in the title which does not represent an attribute                           \\\hline
     3     & Model           & All the rest mixed tokens in the title which do not represent an attribute                         \\\hline
     4     & Model           & A numeric token which is not followed by a measurement unit                                        \\\hline
     5     & Normal          & All the other tokens of the title                                                                  \\\hline
\end{tabular}
\end{center}
\vspace{-0.3cm}
\end{table}

The morphological analysis of a title includes several additional steps which are performed with
the aim of removing the discrepancies between tokens with the same meaning. More specifically, the
product titles of the dataset are parsed sequentially and the following procedures are applied to
the extracted tokens:

\begin{itemize}
\item{Case folding: all letters are converted to lower case.}
\item{Punctuation removal: all punctuation symbols and marks are removed from a title apart from
i) dots and commas which are thousands or decimal separators, and ii) hyphens and slashes which
delimit tokens. In the latter case, these tokens are appended in the title.}
\item{Duplicate tokens removal: the existence of two or more identical tokens in a product title is
rare. However, we found that their removal improves the performance of the algorithm by a
significant margin.}
\end{itemize}

\subsection{Data Structures Construction}
\label{ssec-ds}

After the tokenization and the morphological analysis of a title has been completed, the extracted
tokens and combinations are used to build the following data structures:

\subsubsection{Tokens Lexicon}
\label{sssec-toklex}

This is an ordinary lexicon structure $L_w$, which is used to store the tokens extracted from the
titles of the products. For each token $w$, the tokens lexicon also maintains:

\begin{itemize}
\item[i)] a unique integer identifier (token ID),
\item[ii)] a frequency value $f_w$ which represents the number of products that contain $w$
in their titles, and
\item[iii)] a special variable $s_w$ which is set equal to the semantics of $w$, as indicated by
the first column of Table~\ref{tab-toksem}.
\end{itemize}

\subsubsection{Combinations Lexicon}
\label{sssec-comblex}

The combinations lexicon $L_c$ stores the $k$-combinations (${k}\leq{K}$) of the tokens of the
product titles. The representation of the stored combinations is of particular importance, since
it must support not only fast searching, but also searching for combinations with different
orderings of their tokens. For instance, consider the case where we extract the 3-combination
\textit{CPU 3.2GHz 32MB}, which does not exist in $L_c$. Instead, suppose that $L_c$ contains the
3-combination \textit{CPU 32MB 3.2GHz}, which is the same as the one we search for, but with
different ordering of its tokens. In such cases, we desire to identify the equality between the two
records to avoid the insertion of the same combination twice.

\begin{figure}[t]
\centering
\epsfig{file=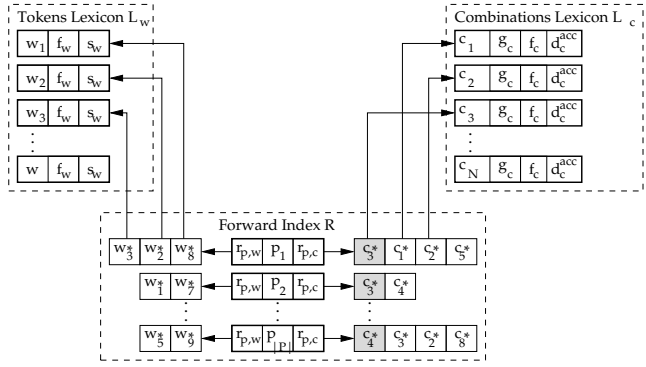, width=3.3in}
\caption{The connection of the forward index $R$ with the tokens lexicon $L_w$ (left), and the
combinations lexicon $L_c$ (right). The grayed boxes indicate the highest-scoring combination (that
is, the cluster) of each product $p \in R$. In this example, $p_1$ and $p_2$ match each other,
because they belong to the same cluster $c_3$. }
\label{fig-ds}
\vspace{-0.1cm}
\end{figure}

The proposed algorithm satisfies this requirement by assigning signatures to all combinations.
The key concept is that \textit{a combination must have the same signature independently of the
ordering of its component tokens}. This means that the two 3-combinations of the previous example,
i.e. \textit{CPU 32MB 3.2GHz} and \textit{CPU 3.2GHz 32MB} must be assigned equal signatures. More
specifically, the following procedure is applied: before a combination $c$ is inserted into $L_c$,
its signature $g_c$ is computed. In case $g_c$ is not found in $L_c$, then $c$ does not exist in
$L_c$ in any form (i.e. under any ordering of its tokens) and can safely be inserted to it.
Otherwise, $c$ resides within $L_c$ in one form or another.

A simple method for computing the signature of a combination $c$ is via tokens sorting and hashing.
More specifically, this method initially retrieves the IDs of the component tokens of $c$ and sorts
them in increasing order. The sorted values are then concatenated and delimited by a special symbol
(e.g. a single space character). The string we obtain is subsequently passed through a string hash
function $h$ which is common for all combinations. The output of $h$ constitutes the desired
signature $g_c$.

As we demonstrate later in the experimental evaluation, the usage of signatures leads to
substantial improvements in the efficiency of the algorithm. Eventually, each combination record
$c \in L_c$, possesses the following properties:

\begin{itemize}
\item[i)] its signature $g_c$,
\item[ii)] a frequency value $f_c$ which represents the number of product titles which contain
$c$, and
\item[iii)] a distance accumulator $d^{acc}_c$ which maintains the sum of the distances of $c$ from
the beginning of the titles. This value will be used later to assign a score to $c$.
\end{itemize}

\subsubsection{Forward Index}
\label{sssec-fi}

The forward index $R$ is essentially a list of all product records. Each product $p$ is associated
with two pointer lists:

\begin{itemize}
\item[i)] the tokens forward list $r_{p,w}$, which maintains $l_t$ pointers to the tokens of the
title $t$ of $p$; and
\item[ii)] the combinations forward list $r_{p,c}$, namely, a list of $\Sigma N(l_t,K)$ pointers
(given by eq.~\ref{eq-tnum-comb}). Each pointer refers to a combination $c$ of $p$, where
$c \in L_c$.
\end{itemize}

In Figure~\ref{fig-ds} we depict the interconnection of the forward index with the tokens and the
combinations lexicon structures. Notice that the existence of pointers in the forward index saves
us the cost of storing the same data twice.

\subsubsection{Construction Algorithm}
\label{sssec-dscon}

\begin{algorithm}[t]
\small
\label{algo-1}
    \ {initialize lexicons $L_w$ and $L_c$ and the forward index $R$}\;
    \ {create the measurement units table $M$}\;
    \For {each product $p$ with title $t$} {
        $R$.insert$(p)$\;
        $W_t \leftarrow$ tokenize the title $t$ of $p$\;
        \For {each token $w \in W_t$} {
			$s_w \xleftarrow{M}$ rules of Table~\ref{tab-toksem}, Subsection \ref{ssec-moran}\;
			$w* \leftarrow L_w.$search$(w)$\;
            \eIf{$w*>0$} {
                $f_w \leftarrow f_w + 1$\;
            } {
                $f_w \leftarrow 1$\;
                $w* \leftarrow L_w.$insert$(w)$\;
            }
            $R$.insert$(p,w*)$\;
        }

        \For {each $k\in[2,K]$} {
            compute all $k$-combinations $C_k$ of $t$\;
            \For {each $k$-combination $c \in C_k$} {
                $g_c \xleftarrow{h}$ compute signature of $c$\;
                $c* \leftarrow L_c.$search$(g_c)$\;
                $d(c,t) \leftarrow$ compute the distance of $c$ from $t$\;
                \eIf{$c*>0$} {
                    $f_c \leftarrow f_c + 1$\;
                    $d^{acc}_c \leftarrow d^{acc}_c + d(c,t)$\;
                } {
                    $f_c \leftarrow 1$\;
                    $d^{acc}_c \leftarrow d(c,t)$\;
                    $c* \leftarrow L_c.$insert$(c)$\;
                }
                $R$.insert$(p,c*)$\;
            }
        }
    }
    \caption{Data structures construction}
\end{algorithm}

Algorithm~\ref{algo-1} presents the construction methodology of the aforementioned data structures.
Initially, each product $p$ enters the forward index $R$ with its tokens and combinations
lists empty (step 4). In the sequel, its title $t$ is parsed and its tokens $W_t$ are extracted.
Each token $w \in W_t$ passes through a filtration process where the morphological analysis of the
previous subsection is performed (i.e. case folding, punctuation removal, etc.). Moreover, the
semantics $s_w$ of $w$ is identified according to the rules of Table \ref{tab-toksem} (step 7).

After this process, a search for $w$ in $L_w$ is performed (step 8). Notice that the search
operation returns a pointer $w*$ to the corresponding token record in $L_w$. In case the search is
unsuccessful, $w$ is inserted in $L_w$ with $f_w=1$ and a pointer to the new record is returned; in
the opposite case, its corresponding frequency value $f_w$ increases by 1 (steps 9--14). Finally,
the pointer $w*$ is inserted into the tokens list of $p$ within the forward index $R$ (step 15).

The procedure continues with the computation of all $2,3,...,K$-combinations of $t$ and the
generation of their respective signatures (steps 17--20). Then, for each combination $c$ the
lexicon $L_c$ is queried against its signature $g_c$. If this search is unsuccessful, $c$ is
inserted in $L_c$ with $f_c=1$ and a pointer $c*$ to the new record is returned; otherwise, $f_c$
increases by one (steps 23--30). The algorithm ends with the insertion of the pointer $c*$ in the
combinations list of $p$ within the forward index $R$ in step 31.

During this process, the distance $d(c,t)$ of $c$ from the beginning of the product title $t$ is
calculated, and it is used to update the distance accumulator $d^{acc}_c$. This distance value will
be employed later by the combinations scoring function. We provide more details about the usage of
the distance accumulator in the next subsection.

\subsection{Scores Computation \& Cluster Selection}
\label{ssec-phase2}

In summary, the purpose of this phase is to compute an importance score $I_{c}$ for each
combination $c \in r_{p,c}$ of each product $p$ of the forward index. The highest-scoring
combination will then be declared as the dominating cluster $u$ where $p$ will be mapped to. All
the other products which will also be mapped to $u$ will be considered that they match $p$.
Finally, the clusters of all products will be utilized to build the clusters universe $U$ which
shall assist us further.

We now elaborate on the form of the combinations score function. Initially, we study the properties
that a combination must possess to be declared as a dominating cluster, and then we proceed to the
quantification of these properties.

\begin{itemize}
\item{\textit{Frequency}: The number of products which contain $c$ is an important parameter, since
the more frequent a combination is, the more products will be mapped to it. In contrast, if we
select a rare combination, we shall not be able to map any other product to it.}
\item{\textit{Length}: The frequency criterion definitely favors the short combinations because it
is more possible to encounter a 2-combination which is common for multiple products, compared to a
3-combination. However, the short combinations are not as descriptive as the longer ones and also,
there is a risk of creating very inhomogeneous clusters which may erroneously contain different
products.}
\item{\textit{Position}: A broadly accepted idea in information retrieval dictates that the most
important words of a document usually appear early, that is, in a small distance from its
beginning.}
\item{\textit{Hot tokens}: A combination which contains multiple highly informational tokens
represents the identity of the product more accurately compared to one which does not include such
tokens.}
\end{itemize}

Given a title $t$, a combination $c$ of $t$, and a token $w\in c$, we consider that $o_w^{(c)}$
is the position (or offset) of $w$ in $c$ and $o_w^{(t)}$ is the position of $w$ in $t$. By using
this notation, the distance $d(c,t)$ between $c$ and $t$ is computed by employing the
well-established Euclidean distance for strings:

\begin{equation}
d^2(c,t)=\sum_{w\in c} { \big(o_w^{(c)} - o_w^{(t)}\big)^2}
\label{eq-dist}
\end{equation}

Based on this equation we compute the average distance of $c$ from the beginning of all titles as
follows:

\begin{equation}
\overline{d(c)}=\frac{1}{f_c}\sum_{\forall{t \supset c}}{d(c,t)}
\label{eq-avgdist}
\end{equation}

The four aforementioned properties of a combination can now be quantified by the following scoring
function:

\begin{equation}
I(c) = \frac{Y_c^2}{\alpha + \overline{d(c)}} \log{f_c}
\label{eq-score}
\end{equation}

\noindent
where $\alpha>0$ is a constant which i) prevents $I(c)$ from getting infinite when $\overline{d(c)}
=0$ (i.e, when $c$ appears always in the beginning of all titles), and ii) determines the
importance of proximity in the overall score of a combination.

The $Y_c$ factor constitutes the IR score of $c$ and it is built by adopting the spirit of the
BM25F scoring method for structured documents \cite{iexml2005}. This scheme is designed to boost
the scores of the words which appear in highly important places of a document (called
\textit{fields}), such as its title.

Although a product title is clearly a short unstructured text, here we introduce the idea of
splitting a title into \textit{virtual fields}, based on the aforementioned semantics of each
token. According to this approach, a title is divided into five virtual fields, from $z_1$ to
$z_5$. Each field is allowed to contain only tokens which have identical semantics. For example,
according to Table~\ref{tab-toksem}, $z_1$ shall accommodate only the tokens which represent the
attributes of a product, whereas $z_2, z_3$ and $z_4$ enlist the tokens which potentially carry
information about the model. Notice that a field may be completely empty, whereas a token can
belong to only one field.

Similarly to BM25F, the $Y_c$ factor is computed by applying the following equation:

\begin{equation}
Y_c = \sum_{\forall w \in c} { idf(w) \frac{ Q(z_{s_w}) }{1-b+bk/\overline{l_c}} }
\label{eq-ir}
\end{equation}

\noindent
where $Q(z_{s_w})$ is the weight of the field which contains a token $w \in c$. Notice here the
dependence of this weight from the semantics value $s_w$. Furthermore, $idf(w)=\log{(|P| / f_w)}$
is the inverse document frequency of $w$ (where $|P|$ is the total number of product titles).
Recall also that $k$ symbolizes the length of $c$, whereas $\overline{l_c}$ is the average length
(in number of tokens) of all combinations in the dataset. Finally, $b$ is a constant whose value
falls into the range $[0,1]$.

In conclusion, eq.~\ref{eq-score} indicates that a product should be clustered under a combination
which: i) is frequent, ii) is reasonably long, iii) usually occurs near the beginning of the titles
and iv) contains multiple important tokens. In the sequel, we employ it to identify the most
appropriate cluster for every product of the forward index $R$.

\begin{algorithm}[t]
\small
\label{algo-2}
	initialize the clusters universe $U$\;
    \For {each product $p$ in $R$} {
        $r_{p,c} \leftarrow $retrieve the combinations forward list of $p$\;
        $\max \leftarrow 0$;~~~~$u \leftarrow$ NULL\;
        \For {each combination $c \in r_{p,c}$} {
		    \For {each token $w \in W_c$} {
		    	$X[s_w] \leftarrow X[s_w] + 1$\;
	    	}
			$Y_c \leftarrow 0$\;
   	        \For {each token $w \in W_c$} {
       			$Q(z_{s_w}) \leftarrow$ Eq. \ref{eq-q}\;
   	        	$Y_c \leftarrow Y_c + $ Eq. \ref{eq-ir}\;
   	        }
			$\overline{d(c)} \leftarrow d^{acc}_c / f_c$\;
			$I_{c} \leftarrow$ Eq. \ref{eq-score}\;
			\If{$I_{c} > \max$} {
				$\max \leftarrow I_{c}$\;
				$u \leftarrow c$\;
			}
        }
		$U$.insert$(u,p)$;~~~~// Algorithm \ref{algo-3}\
    }
    deallocate $L_c-U$\;
    \caption{Score calculation and cluster selection.}
\end{algorithm}

Algorithm \ref{algo-2} contains the details of this procedure. Notice that since we are only
interested in the highest-scoring combination, it is not mandatory to store the scores of all
combinations in some dedicated data structure (e.g. heap); a simple computation of the maximum
score suffices.

Initially, an empty set $U$ is initialized. In the sequel, we iterate through the products of $R$
and for each product ${p}\in{R}$ we traverse its combinations forward list $r_{p,c}$. For each
combination $c \in r_{p,c}$, the field lengths are stored within an array $X$, according to the
semantics of the tokens of $c$ (steps 6--8). In steps 9--13 the IR score of eq. \ref{eq-ir} is
calculated, whereas the next step computes the average distance $\overline{d(c)}$. Having prepared
this data, the score of $c$ is obtained in step 15. In steps 16--19 we conditionally update the
maximum score and the highest-scoring combination.

The combination with the maximum score is subsequently selected as the \textit{dominating cluster},
or simply the \textit{cluster} $u$ of $p$. In the sequel, $u$ is inserted into the global set $U$,
along with the corresponding product $p$, according to the steps 2--6 of Algorithm \ref{algo-3}.
Notice that insertion includes additional operations after step 6, which are described in details
in the next subsection. Finally, the algorithm deallocates the resources occupied by data which are
not useful for the next step, including the combinations which have not been declared clusters,
that is, $L_c-U$.

\subsection{Verification Stage \& Cluster Correction}
\label{ssec-phase3}

The procedures of the previous subsections achieve their goal, that is, unsupervised product
matching by using only their titles. However, there is still room for improvement.

Here we present a post-processing verification step which attempts to recognize false matches. In
the absence of training data, it is based on a simple, but strong hypothesis: In the vast majority
of cases, \textit{each product appears only once in the feed of the same vendor}, or equivalently,
\textit{a vendor does not include identical products in his/her catalog}. Of course, there are some
individual cases where the same product indeed exists multiple times within a catalog of a vendor.
However, such cases are extremely rare and they usually occur by mistake.

This hypothesis, combined with the fact that a cluster contains products which are considered to
match each other (i.e. they are identical), leads to the following lemma:

\begin{lemma}
A cluster $u$ cannot contain two or more products from the same vendor $v$.
\end{lemma}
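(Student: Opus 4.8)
The plan is to argue by contradiction, combining the two facts that the preceding discussion has already made available. First I would recall the defining property of a cluster: by the cluster-selection rule of Subsection~\ref{ssec-phase2} (formalized in Algorithm~\ref{algo-2}), every product that is mapped to the same dominating combination $u$ is declared to match all the other products assigned to $u$, and in this paper ``matching'' means precisely that the products are regarded as \emph{identical}. Second, I would invoke the verification hypothesis stated immediately above the lemma, namely that a single vendor does not list two identical products in its own feed.

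With these two ingredients the deduction is short. Suppose, toward a contradiction, that some cluster $u$ contained two distinct product records $p_1$ and $p_2$ originating from the same vendor $v$. Since $p_1,p_2\in u$, the clustering semantics force $p_1$ and $p_2$ to match, hence to represent the same (identical) product. But then the catalog of $v$ would contain that product twice, directly violating the hypothesis that a vendor never includes identical products in his/her catalog. This contradiction shows that no two products of $u$ can share a vendor, which is exactly the claim.

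The only point that I expect to require care — and really the sole subtlety in an otherwise purely logical argument — is the precise reading of ``identical'' and the status of the hypothesis. As phrased in the text the hypothesis is empirical (it holds ``in the vast majority of cases''), so the lemma is to be understood under the idealization that this hypothesis is taken as exact; I would state this stipulation at the outset so that the rare real-world exceptions acknowledged in the text do not undermine the logical step. Granted that stipulation, the chain ``same cluster $\Rightarrow$ identical products'' together with ``identical products $\Rightarrow$ at most one copy per vendor'' closes the proof with no further computation.
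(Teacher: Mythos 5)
Your argument is correct and coincides with the paper's own proof: both proceed by contradiction, using that co-clustered products are considered identical together with the hypothesis that a vendor never lists the same product twice. Your added remark that the lemma holds under the idealization of the (empirically stated) hypothesis is a reasonable clarification but not a substantive difference.
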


\begin{proof}
Suppose that $u$ contains two products $p_1$ and $p_2$ from the same vendor $v$. Since $u$ contains
only products which match each other, $p_1$ is identical to $p_2$. But then $v$ included the same
product multiple times in his/her catalog, a statement which contradicts our hypothesis.
\end{proof}

This lemma drives the entire verification stage. Based on it, we say
that $v$ is a \textit{violator} of $u$, if $u$ contains two or more of his/her products. In this
case, $u$ is an \textit{invalid cluster} and it requires a special validation process to be applied
to it. In short, this process: i) allows only one product of $v$ in $u$, and ii) evicts the rest
products of $v$ from $u$. The evicted products can either: i) migrate to another existing cluster
according to some criteria, or ii) be transferred to a new cluster.

Recall that technically, a cluster is merely a combination object and as such, it posesses the
properties of Subsection \ref{sssec-comblex}. To support the verification stage, a cluster $u$ must
be extended with the following elements:

\begin{itemize}
\item{A list $V_u$ with the vendors of the products of $u$,}
\item{One of the products of $u$ is selected as the \textit{representative product $\pi_u$} of $u$,
according to a score. The title of $\pi_u$ is used as a label for $u$ and thus, cannot leave $u$.}
\item{One list $P_{u,v}$ per vendor $v \in V_u$ which stores the products that both belong to
$u$ and are provided by $v$. Each product $p \in P_{u,v}$ is assigned two scores: i) $S^1_p=\sum_{w \in t}{idf(w)}$
 will be used to select the representative product $\pi_u$, and ii)
$S^2_p$ which stores the similarity of $p$ with $\pi_u$.}
\end{itemize}

These elements are computed immediately during the insertion of $u$ and $p$ into $U$ (step 21 of
Algorithm \ref{algo-2}). The steps 7--14 of Algorithm \ref{algo-3} describe this process:
initially, the vendor $v$ of $p$ is inserted into the list $V_u$ (provided that $v \notin V_u$). In
the sequel, the score $S^1_p$ is computed, and in case it exceeds the maximum product score in the
cluster, then $p$ is declared as the representative product $\pi_u$ of the cluster.

\begin{algorithm}[t]
\small
\label{algo-3}
\SetAlgoLined
\SetKwProg{Fn}{Function}{ }{end}
	\Fn{$U$.insert$(u,p)$} {
        $v\leftarrow$ vendor of $p$\;
        \If {${u}\notin{U}$} {
            $U$.append$(u)$\;
        }
   		$P_{u,v}$.insert$(p*)$\;
        \If {${v}\notin{V_u}$} {
		  $V_u$.insert$(v)$\;
        }
		$S^1_p \leftarrow \sum_{w \in t}{idf(w)}$\;
		\If {$S^1_p > \max{S^1_p}$} {
			$\max{S^1_p} \leftarrow S^1_p$\;
			$\pi_u \leftarrow p$\;
		}
	}
    \caption{Insertion of a cluster $u$ and a product $p$ into the universe $U$.}
\end{algorithm}

After the required data has been prepared, the verification stage of Algorithm \ref{algo-4} is
executed. For each cluster $u \in U$ we traverse its list of vendors $V_u$ and in case a violator
$v$ is found (i.e., $|P_{u,v}| > 1$), we identify which product of $v$ will stay in $u$. This is
achieved by calculating the similarity score of each product $p \in P_{u,v}$ with the
representative product $\pi_u$, and by sorting $P_{u,v}$ in decreasing similarity score order
(steps 3--7). The first record of the list, namely, the most similar product to $\pi_u$, is
selected to remain in $u$; the rest $(P_{u,v}-P_{u,v}[0])$ products will eventually abandon $u$.

There exist two options to handle the evicted products. The former is applied when there exists
another cluster $u' \in U$ whose representative product $\pi_{u'}$ is highly similar to an evicted
product $p$. In that case, $p$ migrates to $u'$, provided that $u'$ does not contain any other
product of $v$ and it will not become invalid after the insertion of $p$. If no cluster of $U$
satisfies this criteria, then the latter option dictates that we create a new cluster $u''$, append
$u''$ to the universe $U$, and finally, transfer $p$ to $u''$ (steps 8--17).

The final point which needs to be clarified is the method for retrieving the clusters which are
both valid and relevant to an evicted product $p$ of a vendor $v$ (step 10). The strategy we
adopted was to compute the cosine similarity of $p$ with the representative product of each
candidate cluster which did not contain any other products of $v$. In case the maximum computed
similarity is above a predefined threshold $\tau$, then $p$ is inserted into the corresponding
cluster. Otherwise, a new cluster is created and $p$ is transferred there.

\subsection{Parameter Fixing}
\label{ssec-tune}

\begin{algorithm}[t]
\small
\label{algo-4}
    \For {each cluster $u$ in $U$} {
	    \For {each vendor $v$ in $V_u$} {
	    	\If {$|P_{u,v}| > 1$} {
	    		\For {each product $p \in P_{u,v}$} {
	    			compute $S^2_p \leftarrow sim(p,\pi_u)$\;
	    		}
	    		sort $P_{u,v}$ in decreasing $S_p$ order\;
	    		\For {each product $p \in (P_{u,v}-P_{u,v}[0])$} {
	    			evict $p$ from $u$\;
	    			$u' \leftarrow$ retrieve cluster // apply criteria\;
	    			\eIf {$S^2_{\pi_{u'}} > \tau$} {
	    				$u'$.insert$(p)$\;
	    			} {
	    				create new cluster $u''$\;
	    				$U$.insert$(u'',p)$;~~~~// Algorithm \ref{algo-3}\
	    			}
	    		}
	    	}
	    }
    }
    \caption{Verification and cluster reselection.}
\end{algorithm}

Until this point, we introduced five parameters in the presentation of UPM. Here we fix  the
values of these parameters based on the conclusions of exhaustive experimentation with multiple
datasets. The purpose of setting fixed values to all parameters is to present an algorithm which is
not only unsupervised, but also parameter-free.

We begin with $K$, the modifier which determines the maximum number of tokens which can be used in
a single combination. In all cases, the value $K^{*}$ which maximized the effectiveness of the
algorithm was found to be equal to the half of the average title length, that is:

\begin{equation}
K^{*}=\left \lfloor { \overline{l_t} / 2 } \right \rfloor
\label{eq-K}
\end{equation}

Larger or smaller values of $K$ have a negative impact on performance. This observation leads to
the conclusion that, on average, only a portion of the tokens of a title are actually important for
the identification of a product. This conclusion established the basis of UPM+, a simple variant
which takes into consideration only the first $2K^{*}$ tokens of a title, and ignores the rest of
them. Therefore, the extracted combinations are reduced by a significant margin (especially in the
case of long titles), whereas it is anticipated that we only suffer a small loss in matching
performance. This anticipation is verified experimentally in Section \ref{sec-experiments}.

In addition, Eq.~\ref{eq-score} depends on $\alpha$, which determines the
importance of proximity in the score of a combination. Our experiments revealed that the setting
$\alpha=1$ maximized the effectiveness of UPM in all examined cases.

The third parameter of the algorithm is $b$, and it was introduced in Eq.~\ref{eq-ir}. The
value of $b$ which consistently led to satisfactory results was $b=1$.

The next parameter to determine is the field weights of Eq.~\ref{eq-ir}. The simplest solution
here is to assign a fixed weight value to each field; for instance, one may consider that the model
fields are twice as important as the field which contains the normal tokens. Although this approach
delivers good results in some cases, it has two problems: i) the weights are set arbitrarily in an
ad-hoc manner, and ii) a set of predefined field weights which works well in one case, may lead to
poor performance in another.

For these reasons, we dropped the idea of assigning fixed values to the field weights. Instead, we
discovered a function which leads to satisfactory performance in all cases:

\begin{equation}
Q(z_{s_w}) = \frac{|W|}{X[s_w]}
\label{eq-q}
\end{equation}

\noindent
where $|W|$ is the total number of the distinct tokens of the product titles, and $X$ is an array
with a size equal to five. Each entry in $X$ represents the number of tokens of the field which is
associated to its index. For instance, in conjunction with the first column of Table
\ref{tab-toksem}, $X[1]$ stores the population of $z_1$, that is, the number of tokens in the title
which represent an attribute of the product. Equation~\ref{eq-q} implements the intuition that the
more tokens a field contains, the less important its tokens are, and vice versa.

Finally, we determine the value of the parameter $\tau$ of Algorithm \ref{algo-4}. Recall that this
parameter controls the similarity threshold of an evicted product with a candidate cluster. The
value which maximized performance was $\tau=0.4$.

\section{Experiments}
\label{sec-experiments}

This section analyzes the results of the experimental evaluation of the proposed algorithm. In
particular, we compare UPM and UPM+ with two popular string similarity metrics, i.e, cosine
similarity, and Jaccard index, as well as their enhanced versions, which include IDF token weights.
Given two titles $t$ and $t'$, these metrics are defined as follows:

\begin{itemize}
\item{cosine similarity: $\mathcal{CS}={|{t}\cap{t'}|}/{\sqrt{|t|}\sqrt{|t'|}}$,}
\item{cosine similarity with IDF token weights:
\begin{equation}
\nonumber
\mathcal{CS'}=\frac{\sum_{w\in(t\cap{t'})}{idf_w^2}}{\sqrt{\sum_{w\in t}{idf_w^2}}\sqrt{\sum_{w\in t'}{idf_w^2}}}
\label{eq-cosidf}
\end{equation}
}
\item{Jaccard index: $\mathcal{J}={|{t}\cap{t'}|}/{|{t}\cup{t'}|}$, and}
\item{Jaccard index with IDF token weights:
\begin{equation}
\nonumber
\mathcal{J'}=\frac{\sum_{w\in(t\cap{t'})}{idf_w^2}}{\sum_{w\in(t\cup{t'})}{idf_w^2}}
\label{eq-jacidf}
\end{equation}
}
\end{itemize}

To ensure the robustness of our evaluation and to avoid results which were accidentally obtained,
we based our experiments on multiple datasets. In particular, we crawled two popular product
comparison platforms, PriceRunner\footnote{https://www.pricerunner.com/} and
Skroutz\footnote{https://www.skroutz.gr/}, and we constructed 8 datasets out of each one. Each of
these 16 datasets represents a specific product category. The categories were selected with two
criteria: i) to study the performance difference of the same methods on similar products which were
provided by different vendors, and ii) to examine the effectiveness of the algorithms on products
from diverse categories. For this reason, we include products from both identical and different
categories in our experiments. Moreover, we created one aggregate dataset per platform, which
contains all the products from all 8 categories combined. These datasets enable the examination of
the performance on heterogeneous datasets.

To facilitate prices and features comparison, the platforms group the same products into clusters.
These clusters were utilized to establish the ground-truth for the evaluation of the various
methods. More specifically, similarly to UPM, both platforms consider that all the titles within
a cluster represent the same product. Hence, each dataset is accompanied by a special ``matches''
file, which stores all the pairs of matching titles of all clusters. This file is subsequently used
to verify the effectiveness of each method.

Table \ref{tab-data} presents the 18 experimental datasets accompanied by several useful
characteristics. The first 9 rows concern the datasets which were crawled from PriceRunner, whereas
the next 9 are about the ones which were acquired from Skroutz. The columns 2, 3, and 4 display the
distinct number of vendors, products, and product titles of each dataset respectively. Moreover,
the fifth column shows the average length of the titles; this parameter is important because it
determines the value of $K$ according to eq. \ref{eq-K}.

Unfortunately, we could not include results from the method of \cite{cikm2012}. This algorithm
submits queries to Web search engines to i) enrich the product titles with important missing words
(one query per title), and ii) to assign importance scores to the words of the enriched titles (one
query \textit{per word pair, per title}). If we applied this method on the \textit{Aggregate}
dataset of Skroutz (about $24 \cdot 10^4$ titles and 7 words per title), the required number of
queries would be 5.3 million. Clearly, this cost renders the method entirely unsustainable.

Moreover, notice that in \cite{cikm2012}, the proposed method is compared against only one
similarity metric by employing only 2 small datasets. In contrast, here we evaluate UPM and UPM+
against 4 similarity metrics by using 18 datasets.

The experiments were conducted on a machine with an Intel CoreI7 7700@3.6GHz CPU and 32GB
of RAM, running Ubuntu Linux 16.04 LTS. All methods were implemented in C++ and compiled by gcc
with the -O3 speed optimization flag. We have made both this code and the datasets publicly
available on GitHub\footnote{https://github.com/lakritidis/UPM} to allow the interested
researchers verify our results and work further on our findings.

\begin{table}[t]
\begin{center}
\small
\caption{The experimental datasets accompanied by their characteristics}
\label{tab-data}
\begin{tabular}{|l|c|c|c|c|}\hline
{\bf Dataset}          &  $|V|$  &  $|P|$  & {\bf Titles} &  $\overline{l_t}$ \\\hline
CPUs                   &    37   &  1901   &     3862     &       11.285      \\\hline
Digital Cameras        &   103   &   836   &     2697     &        9.605      \\\hline
Dishwashers            &    94   &  1678   &     3424     &        6.819      \\\hline
Microwaves             &   114   &  1039   &     2342     &        7.591      \\\hline
Mobile Phones          &    84   &  1837   &     4081     &        8.416      \\\hline
Refrigerators          &   118   &  5172   &    11291     &        7.847      \\\hline
TVs                    &   129   &  1678   &     3564     &       10.263      \\\hline
Washing Machines       &    87   &  1703   &     4044     &        7.931      \\\hline
PriceRunner Aggregate  &   306   & 15844   &    35305     &        8.560      \\\hline\hline
Air Conditioners       &   216   &  1442   &    13595     &       10.497      \\\hline
Car Batteries          &    66   &  2097   &     5864     &        8.073      \\\hline
Cookers \& Ovens       &   163   &  1355   &    10858     &        6.455      \\\hline
CPUs                   &    92   &   356   &     1906     &        9.115      \\\hline
Digital Cameras        &   152   &   973   &     4111     &        8.802      \\\hline
Refrigerators          &   161   &  1697   &    16177     &        5.955      \\\hline
TVs                    &   205   &  1246   &     7002     &        7.382      \\\hline
Watches                &   212   & 60559   &   178657     &        6.517      \\\hline
Skroutz Aggregate      &   652   & 68512   &   238170     &        6.827      \\\hline
\end{tabular}
\end{center}
\vspace{-0.2cm}
\end{table}

\subsection{Effectiveness Evaluation}
\label{ssec-effec}

The experimentation process is organized into two phases: In this subsection we study the
effectiveness of the proposed algorithm, whereas in Subsection \ref{ssec-effic} we examine its
efficiency. In both phases, the five parameters of the algorithm are fixed according to the
discussion of Subsection \ref{ssec-tune}.

UPM and UPM+ achieve product matching by generating clusters of similar products. To evaluate
their output we applied the following methodology: Initially, we iterate through each cluster and
for each product in the cluster, we create one pairwise match record with each of the rest of the
products in the same cluster. In other words, we create a database with all the distinct product
pairs within a cluster. In the sequel, we compare the records of this database with the ones of the
aforementioned matches file and we count the number of true positives and negatives.

The matching quality was measured by employing the $F1$ score, defined by
$F1=2\mathcal{PR}/\mathcal{(P+R)}$, where $\mathcal{P}$ and $\mathcal{R}$ represent the values of
Precision and the Recall, respectively.

Figure \ref{fig-f1-pr} illustrates the performance of UPM and UPM+ against the aforementioned
methods, for the 9 datasets of PriceRunner. Each diagram depicts the fluctuation of the $F1$ scores
for various similarity thresholds ranging from 0.1 to 0.9. Recall that the similarity threshold
$\tau$ determines whether two entities $e_1$ and $e_2$ match or not. That is, $e_1$ matches $e_2$
only if their similarity value exceeds $\tau$. Since in Subsection \ref{ssec-tune} we fixed
$\tau=0.4$, the $F1$ scores of UPM and UPM+ are represented by horizontal lines.

The first conclusion which derives from these diagrams is that in all datasets, the similarity
metrics $\mathcal{CS'}$ and $\mathcal{J'}$ with IDF token weights performed much better than their
standard expressions $\mathcal{CS}$ and $\mathcal{J}$. For instance, in the \textit{Aggregate}
dataset of Fig. \ref{fig-f1-pr}i, the effectiveness of $\mathcal{CS'}$ was $F1=0.39$; compared to
the corresponding $F1$ scores of $\mathcal{CS}$ and $\mathcal{J}$, this value was higher by 221\%
and 214\% respectively. Similar differences are also observed for $\mathcal{J'}$: its matching
quality surpassed that of $\mathcal{CS}$ and $\mathcal{J}$ by 219\% and 211\%. For this reason, we
omit the commentation of the plain cosine similarity $\mathcal{CS}$ and Jaccard index $\mathcal{J}$
in the discussion which follows.

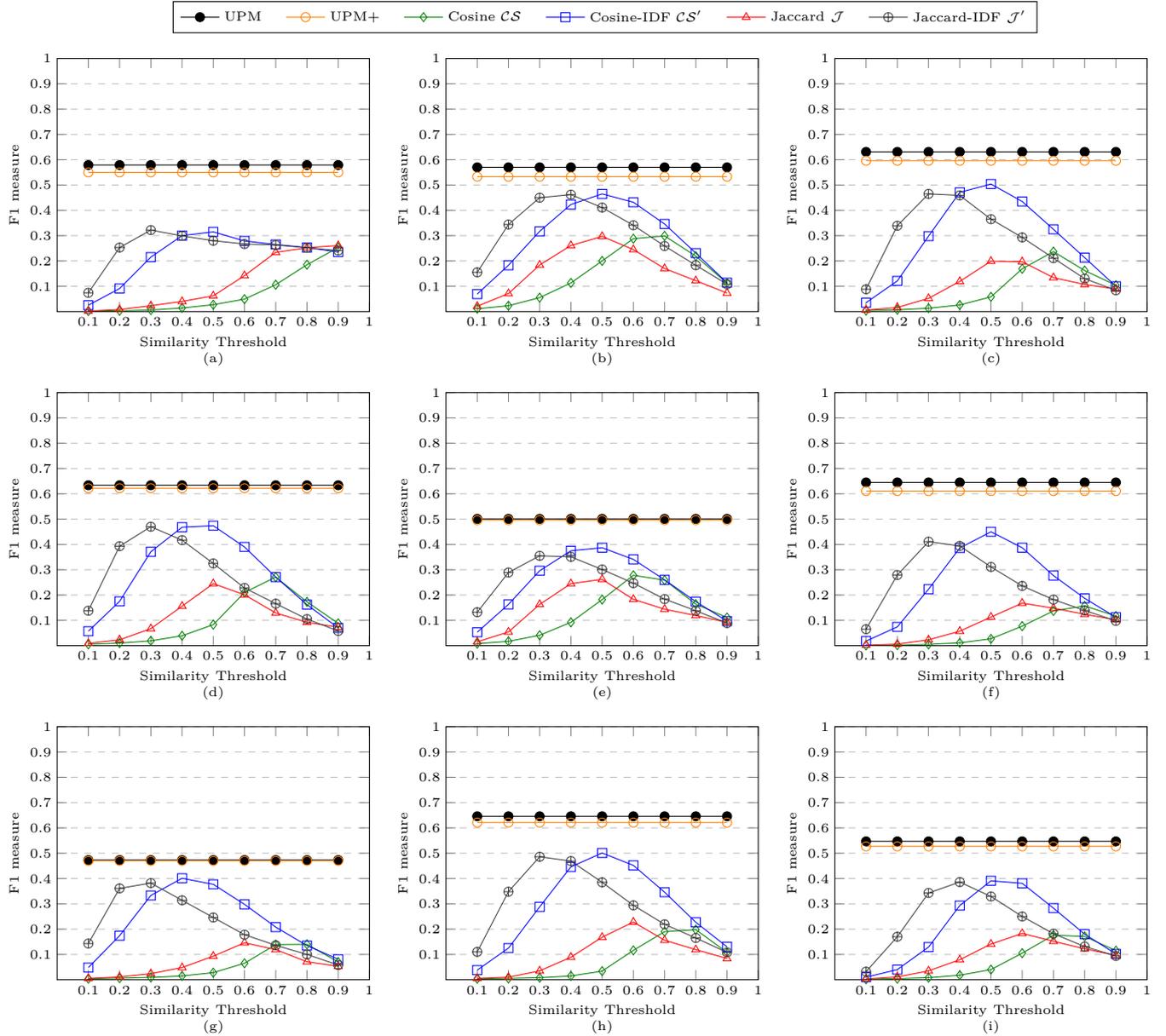
\begin{figure*}[!htbp]
\begin{tikzpicture}
\smaller
    % 1. CPUS
    \begin{axis}[
        xlabel={Similarity Threshold\\(a)},
        xlabel style={align=center},
        ylabel={F1 measure},
        xmin=0, xmax=1,
        ymin=0, ymax=1,
        xtick={0.1,0.2,0.3,0.4,0.5,0.6,0.7,0.8,0.9,1.0},
        ytick={0.1,0.2,0.3,0.4,0.5,0.6,0.7,0.8,0.9,1.0},
        ymajorgrids=true,
        grid style=dashed,
        legend style={
           at={(0.37,1.1)},anchor=south west,legend columns=-1,  /tikz/every even column/.append style={column sep=0.3cm}
        }
    ]
    \addplot[color=black, mark=*] coordinates { (0.1,0.579)(0.2,0.579)(0.3,0.579)(0.4,0.579)(0.5,0.579)(0.6,0.579)(0.7,0.579)(0.8,0.579)(0.9,0.579) };
    \addplot[color=orange, mark=o] coordinates { (0.1,0.550)(0.2,0.550)(0.3,0.550)(0.4,0.550)(0.5,0.550)(0.6,0.550)(0.7,0.550)(0.8,0.550)(0.9,0.550) };
    \addplot[color=green!50!black, mark=diamond] coordinates { (0.1,0.002)(0.2,0.003)(0.3,0.006)(0.4,0.014)(0.5,0.027)(0.6,0.049)(0.7,0.106)(0.8,0.185)(0.9,0.253) };
    \addplot[color=blue, mark=square] coordinates { (0.1,0.025)(0.2,0.091)(0.3,0.215)(0.4,0.300)(0.5,0.315)(0.6,0.279)(0.7,0.265)(0.8,0.253)(0.9,0.235) };
    \addplot[color=red, mark=triangle] coordinates { (0.1,0.002)(0.2,0.008)(0.3,0.023)(0.4,0.040)(0.5,0.062)(0.6,0.142)(0.7,0.234)(0.8,0.252)(0.9,0.261) };
    \addplot[color=gray!50!black, mark=oplus] coordinates { (0.1,0.074)(0.2,0.253)(0.3,0.322)(0.4,0.299)(0.5,0.280)(0.6,0.267)(0.7,0.263)(0.8,0.252)(0.9,0.241) };
    \smaller
    \legend{UPM, UPM+, Cosine $\mathcal{CS}$, Cosine-IDF $\mathcal{CS'}$, Jaccard $\mathcal{J}$, Jaccard-IDF $\mathcal{J'}$}
    \end{axis}

    % 2. DIGITAL CAMERAS
    \begin{axis}[
        xshift=6cm,
        xlabel={Similarity Threshold\\(b)},
        xlabel style={align=center},
        ylabel={F1 measure},
        xmin=0, xmax=1,
        ymin=0, ymax=1,
        xtick={0.1,0.2,0.3,0.4,0.5,0.6,0.7,0.8,0.9,1.0},
        ytick={0.1,0.2,0.3,0.4,0.5,0.6,0.7,0.8,0.9,1.0},
        ymajorgrids=true,
        grid style=dashed
    ]
    \addplot[color=black, mark=*] coordinates { (0.1,0.570)(0.2,0.570)(0.3,0.570)(0.4,0.570)(0.5,0.570)(0.6,0.570)(0.7,0.570)(0.8,0.570)(0.9,0.570) };
    \addplot[color=orange, mark=o] coordinates { (0.1,0.533)(0.2,0.533)(0.3,0.533)(0.4,0.533)(0.5,0.533)(0.6,0.533)(0.7,0.533)(0.8,0.533)(0.9,0.533) };
    \addplot[color=green!50!black, mark=diamond] coordinates { (0.1,0.011)(0.2,0.023)(0.3,0.055)(0.4,0.113)(0.5,0.200)(0.6,0.288)(0.7,0.299)(0.8,0.219)(0.9,0.112) };
    \addplot[color=blue, mark=square] coordinates { (0.1,0.069)(0.2,0.183)(0.3,0.317)(0.4,0.423)(0.5,0.465)(0.6,0.432)(0.7,0.346)(0.8,0.231)(0.9,0.114) };
    \addplot[color=red, mark=triangle] coordinates { (0.1,0.021)(0.2,0.071)(0.3,0.184)(0.4,0.261)(0.5,0.297)(0.6,0.245)(0.7,0.170)(0.8,0.122)(0.9,0.073) };
    \addplot[color=gray!50!black, mark=oplus] coordinates { (0.1,0.155)(0.2,0.344)(0.3,0.450)(0.4,0.462)(0.5,0.411)(0.6,0.341)(0.7,0.259)(0.8,0.183)(0.9,0.108) };
    \smaller
    \end{axis}

    % 3. DISHWASHERS
    \begin{axis}[
        xshift=12cm,
        xlabel={Similarity Threshold\\(c)},
        xlabel style={align=center},
        ylabel={F1 measure},
        xmin=0, xmax=1,
        ymin=0, ymax=1,
        xtick={0.1,0.2,0.3,0.4,0.5,0.6,0.7,0.8,0.9,1.0},
        ytick={0.1,0.2,0.3,0.4,0.5,0.6,0.7,0.8,0.9,1.0},
        ymajorgrids=true,
        grid style=dashed
    ]
    \addplot[color=black, mark=*] coordinates { (0.1,0.631)(0.2,0.631)(0.3,0.631)(0.4,0.631)(0.5,0.631)(0.6,0.631)(0.7,0.631)(0.8,0.631)(0.9,0.631) };
    \addplot[color=orange, mark=o] coordinates { (0.1,0.596)(0.2,0.596)(0.3,0.596)(0.4,0.596)(0.5,0.596)(0.6,0.596)(0.7,0.596)(0.8,0.596)(0.9,0.596) };
    \addplot[color=green!50!black, mark=diamond] coordinates { (0.1,0.004)(0.2,0.007)(0.3,0.013)(0.4,0.026)(0.5,0.058)(0.6,0.169)(0.7,0.237)(0.8,0.162)(0.9,0.104) };
    \addplot[color=blue, mark=square] coordinates { (0.1,0.035)(0.2,0.121)(0.3,0.298)(0.4,0.471)(0.5,0.504)(0.6,0.435)(0.7,0.325)(0.8,0.213)(0.9,0.100) };
    \addplot[color=red, mark=triangle] coordinates { (0.1,0.006)(0.2,0.016)(0.3,0.052)(0.4,0.119)(0.5,0.199)(0.6,0.197)(0.7,0.134)(0.8,0.107)(0.9,0.089) };
    \addplot[color=gray!50!black, mark=oplus] coordinates { (0.1,0.088)(0.2,0.339)(0.3,0.465)(0.4,0.459)(0.5,0.365)(0.6,0.293)(0.7,0.211)(0.8,0.130)(0.9,0.084) };
    \smaller
    \end{axis}

    % 4. MICROWAVES
    \begin{axis}[
        yshift=-5.2cm,
        xlabel={Similarity Threshold\\(d)},
        xlabel style={align=center},
        ylabel={F1 measure},
        xmin=0, xmax=1,
        ymin=0, ymax=1,
        xtick={0.1,0.2,0.3,0.4,0.5,0.6,0.7,0.8,0.9,1.0},
        ytick={0.1,0.2,0.3,0.4,0.5,0.6,0.7,0.8,0.9,1.0},
        ymajorgrids=true,
        grid style=dashed
    ]
    \addplot[color=black, mark=*] coordinates { (0.1,0.634)(0.2,0.634)(0.3,0.634)(0.4,0.634)(0.5,0.634)(0.6,0.634)(0.7,0.634)(0.8,0.634)(0.9,0.634) };
    \addplot[color=orange, mark=o] coordinates { (0.1,0.622)(0.2,0.622)(0.3,0.622)(0.4,0.622)(0.5,0.622)(0.6,0.622)(0.7,0.622)(0.8,0.622)(0.9,0.622) };
    \addplot[color=green!50!black, mark=diamond] coordinates { (0.1,0.005)(0.2,0.010)(0.3,0.019)(0.4,0.039)(0.5,0.083)(0.6,0.210)(0.7,0.271)(0.8,0.173)(0.9,0.088) };
    \addplot[color=blue, mark=square] coordinates { (0.1,0.057)(0.2,0.175)(0.3,0.371)(0.4,0.468)(0.5,0.474)(0.6,0.390)(0.7,0.271)(0.8,0.162)(0.9,0.070) };
    \addplot[color=red, mark=triangle] coordinates { (0.1,0.009)(0.2,0.023)(0.3,0.067)(0.4,0.156)(0.5,0.245)(0.6,0.201)(0.7,0.129)(0.8,0.092)(0.9,0.073) };
    \addplot[color=gray!50!black, mark=oplus] coordinates { (0.1,0.138)(0.2,0.393)(0.3,0.470)(0.4,0.417)(0.5,0.325)(0.6,0.228)(0.7,0.166)(0.8,0.105)(0.9,0.058) };
    \smaller
    \end{axis}

    % 5. MOBILE
    \begin{axis}[
        yshift=-5.2cm,
        xshift=6cm,
        xlabel={Similarity Threshold\\(e)},
        xlabel style={align=center},
        ylabel={F1 measure},
        xmin=0, xmax=1,
        ymin=0, ymax=1,
        xtick={0.1,0.2,0.3,0.4,0.5,0.6,0.7,0.8,0.9,1.0},
        ytick={0.1,0.2,0.3,0.4,0.5,0.6,0.7,0.8,0.9,1.0},
        ymajorgrids=true,
        grid style=dashed,
    ]
    \addplot[color=black, mark=*] coordinates { (0.1,0.501)(0.2,0.501)(0.3,0.501)(0.4,0.501)(0.5,0.501)(0.6,0.501)(0.7,0.501)(0.8,0.501)(0.9,0.501) };
    \addplot[color=orange, mark=o] coordinates { (0.1,0.496)(0.2,0.496)(0.3,0.496)(0.4,0.496)(0.5,0.496)(0.6,0.496)(0.7,0.496)(0.8,0.496)(0.9,0.496) };
    \addplot[color=green!50!black, mark=diamond] coordinates { (0.1,0.008)(0.2,0.017)(0.3,0.041)(0.4,0.092)(0.5,0.182)(0.6,0.278)(0.7,0.259)(0.8,0.164)(0.9,0.111) };
    \addplot[color=blue, mark=square] coordinates { (0.1,0.053)(0.2,0.163)(0.3,0.296)(0.4,0.375)(0.5,0.387)(0.6,0.341)(0.7,0.260)(0.8,0.174)(0.9,0.097) };
    \addplot[color=red, mark=triangle] coordinates { (0.1,0.015)(0.2,0.054)(0.3,0.163)(0.4,0.245)(0.5,0.262)(0.6,0.183)(0.7,0.144)(0.8,0.119)(0.9,0.092) };
    \addplot[color=gray!50!black, mark=oplus] coordinates { (0.1,0.132)(0.2,0.289)(0.3,0.355)(0.4,0.351)(0.5,0.301)(0.6,0.247)(0.7,0.184)(0.8,0.138)(0.9,0.089) };
    \smaller
    \end{axis}

    % 6. REFRIGERATORS
    \begin{axis}[
        yshift=-5.2cm,
        xshift=12cm,
        xlabel={Similarity Threshold\\(f)},
        xlabel style={align=center},
        ylabel={F1 measure},
        xmin=0, xmax=1,
        ymin=0, ymax=1,
        xtick={0.1,0.2,0.3,0.4,0.5,0.6,0.7,0.8,0.9,1.0},
        ytick={0.1,0.2,0.3,0.4,0.5,0.6,0.7,0.8,0.9,1.0},
        ymajorgrids=true,
        grid style=dashed,
    ]
    \addplot[color=black, mark=*] coordinates { (0.1,0.645)(0.2,0.645)(0.3,0.645)(0.4,0.645)(0.5,0.645)(0.6,0.645)(0.7,0.645)(0.8,0.645)(0.9,0.645) };
    \addplot[color=orange, mark=o] coordinates { (0.1,0.611)(0.2,0.611)(0.3,0.611)(0.4,0.611)(0.5,0.611)(0.6,0.611)(0.7,0.611)(0.8,0.611)(0.9,0.611) };
    \addplot[color=green!50!black, mark=diamond] coordinates { (0.1,0.001)(0.2,0.002)(0.3,0.005)(0.4,0.011)(0.5,0.027)(0.6,0.077)(0.7,0.137)(0.8,0.157)(0.9,0.117) };
    \addplot[color=blue, mark=square] coordinates { (0.1,0.018)(0.2,0.074)(0.3,0.223)(0.4,0.386)(0.5,0.450)(0.6,0.387)(0.7,0.277)(0.8,0.187)(0.9,0.112) };
    \addplot[color=red, mark=triangle] coordinates { (0.1,0.002)(0.2,0.006)(0.3,0.023)(0.4,0.057)(0.5,0.113)(0.6,0.168)(0.7,0.148)(0.8,0.124)(0.9,0.102) };
    \addplot[color=gray!50!black, mark=oplus] coordinates { (0.1,0.065)(0.2,0.279)(0.3,0.411)(0.4,0.394)(0.5,0.311)(0.6,0.236)(0.7,0.182)(0.8,0.139)(0.9,0.098) };
    \smaller
    \end{axis}

    % 7. TVs
    \begin{axis}[
        yshift=-10.4cm,
        xlabel={Similarity Threshold\\(g)},
        xlabel style={align=center},
        ylabel={F1 measure},
        xmin=0, xmax=1,
        ymin=0, ymax=1,
        xtick={0.1,0.2,0.3,0.4,0.5,0.6,0.7,0.8,0.9,1.0},
        ytick={0.1,0.2,0.3,0.4,0.5,0.6,0.7,0.8,0.9,1.0},
        ymajorgrids=true,
        grid style=dashed,
    ]
    \addplot[color=black, mark=*] coordinates { (0.1,0.473)(0.2,0.473)(0.3,0.473)(0.4,0.473)(0.5,0.473)(0.6,0.473)(0.7,0.473)(0.8,0.473)(0.9,0.473) };
    \addplot[color=orange, mark=o] coordinates { (0.1,0.470)(0.2,0.470)(0.3,0.470)(0.4,0.470)(0.5,0.470)(0.6,0.470)(0.7,0.470)(0.8,0.470)(0.9,0.470) };
    \addplot[color=green!50!black, mark=diamond] coordinates { (0.1,0.004)(0.2,0.006)(0.3,0.009)(0.4,0.015)(0.5,0.028)(0.6,0.065)(0.7,0.139)(0.8,0.140)(0.9,0.070) };
    \addplot[color=blue, mark=square] coordinates { (0.1,0.048)(0.2,0.174)(0.3,0.333)(0.4,0.401)(0.5,0.377)(0.6,0.298)(0.7,0.208)(0.8,0.135)(0.9,0.081) };
    \addplot[color=red, mark=triangle] coordinates { (0.1,0.005)(0.2,0.011)(0.3,0.024)(0.4,0.048)(0.5,0.093)(0.6,0.146)(0.7,0.119)(0.8,0.071)(0.9,0.053) };
    \addplot[color=gray!50!black, mark=oplus] coordinates { (0.1,0.143)(0.2,0.361)(0.3,0.382)(0.4,0.314)(0.5,0.246)(0.6,0.178)(0.7,0.136)(0.8,0.100)(0.9,0.058) };
    \smaller
    \end{axis}

    % 8. WASHING MACHINES
    \begin{axis}[
        yshift=-10.4cm,
        xshift=6cm,
        xlabel={Similarity Threshold\\(h)},
        xlabel style={align=center},
        ylabel={F1 measure},
        xmin=0, xmax=1,
        ymin=0, ymax=1,
        xtick={0.1,0.2,0.3,0.4,0.5,0.6,0.7,0.8,0.9,1.0},
        ytick={0.1,0.2,0.3,0.4,0.5,0.6,0.7,0.8,0.9,1.0},
        ymajorgrids=true,
        grid style=dashed,
    ]
    \addplot[color=black, mark=*] coordinates { (0.1,0.646)(0.2,0.646)(0.3,0.646)(0.4,0.646)(0.5,0.646)(0.6,0.646)(0.7,0.646)(0.8,0.646)(0.9,0.646) };
    \addplot[color=orange, mark=o] coordinates { (0.1,0.621)(0.2,0.621)(0.3,0.621)(0.4,0.621)(0.5,0.621)(0.6,0.621)(0.7,0.621)(0.8,0.621)(0.9,0.621) };
    \addplot[color=green!50!black, mark=diamond] coordinates { (0.1,0.004)(0.2,0.005)(0.3,0.008)(0.4,0.015)(0.5,0.034)(0.6,0.116)(0.7,0.190)(0.8,0.198)(0.9,0.108) };
    \addplot[color=blue, mark=square] coordinates { (0.1,0.038)(0.2,0.125)(0.3,0.288)(0.4,0.446)(0.5,0.501)(0.6,0.452)(0.7,0.346)(0.8,0.227)(0.9,0.130) };
    \addplot[color=red, mark=triangle] coordinates { (0.1,0.005)(0.2,0.010)(0.3,0.035)(0.4,0.089)(0.5,0.168)(0.6,0.228)(0.7,0.156)(0.8,0.119)(0.9,0.084) };
    \addplot[color=gray!50!black, mark=oplus] coordinates { (0.1,0.110)(0.2,0.348)(0.3,0.486)(0.4,0.469)(0.5,0.385)(0.6,0.294)(0.7,0.219)(0.8,0.165)(0.9,0.108) };
    \smaller
    \end{axis}

    % 9. AGGREGATE
    \begin{axis}[
        yshift=-10.4cm,
        xshift=12cm,
        xlabel={Similarity Threshold\\(i)},
        xlabel style={align=center},
        ylabel={F1 measure},
        xmin=0, xmax=1,
        ymin=0, ymax=1,
        xtick={0.1,0.2,0.3,0.4,0.5,0.6,0.7,0.8,0.9,1.0},
        ytick={0.1,0.2,0.3,0.4,0.5,0.6,0.7,0.8,0.9,1.0},
        ymajorgrids=true,
        grid style=dashed,
    ]
    \addplot[color=black, mark=*] coordinates { (0.1,0.547)(0.2,0.547)(0.3,0.547)(0.4,0.547)(0.5,0.547)(0.6,0.547)(0.7,0.547)(0.8,0.547)(0.9,0.547) };
    \addplot[color=orange, mark=o] coordinates { (0.1,0.527)(0.2,0.527)(0.3,0.527)(0.4,0.527)(0.5,0.527)(0.6,0.527)(0.7,0.527)(0.8,0.527)(0.9,0.527) };
    \addplot[color=green!50!black, mark=diamond] coordinates { (0.1,0.002)(0.2,0.004)(0.3,0.008)(0.4,0.018)(0.5,0.040)(0.6,0.105)(0.7,0.176)(0.8,0.171)(0.9,0.115) };
    \addplot[color=blue, mark=square] coordinates { (0.1,0.010)(0.2,0.040)(0.3,0.129)(0.4,0.293)(0.5,0.391)(0.6,0.381)(0.7,0.283)(0.8,0.180)(0.9,0.102) };
    \addplot[color=red, mark=triangle] coordinates { (0.1,0.003)(0.2,0.011)(0.3,0.035)(0.4,0.079)(0.5,0.141)(0.6,0.183)(0.7,0.152)(0.8,0.122)(0.9,0.096) };
    \addplot[color=gray!50!black, mark=oplus] coordinates { (0.1,0.032)(0.2,0.170)(0.3,0.343)(0.4,0.386)(0.5,0.329)(0.6,0.250)(0.7,0.182)(0.8,0.131)(0.9,0.094) };
    \smaller
    \end{axis}
\end{tikzpicture}
\caption{Performance comparison of various product matching methods for the PriceRunner datasets:
a) CPUs, b) digital cameras, c) dishwashers, d) microwaves, e) mobile phones, f) refrigerators, g)
TVs, h) washing machines, and i) aggregate dataset.}
\label{fig-f1-pr}
\vspace{-0.25cm}
\end{figure*}

UPM prevailed over its adversary approaches in all cases. The highest $F1$ values were observed
in the cases of \textit{Washing Machines} (0.646), \textit{Refrigerators} (0.645),
\textit{Microwave Ovens} (0.634), and \textit{Dishwashers} (0.631) in Figures
\ref{fig-f1-pr}h, \ref{fig-f1-pr}f, \ref{fig-f1-pr}d, and \ref{fig-f1-pr}c, respectively. The
strongest opponent was $\mathcal{CS'}$, as its $F1$ scores were 0.501 (-22\%), 0.45 (-30\%), 0.474
(-25\%), and 0.504 (-20\%) for the aforementioned datasets, respectively. The largest percentage
difference was measured in the case of \textit{CPUs} (Fig \ref{fig-f1-pr}a), where our method
achieved an $F1$ value which was about 84\% greater than the respective $F1$ value of
$\mathcal{CS'}$. On the other hand, the smallest difference was observed in the \textit{TVs}
dataset, and it was roughly 17\%.

Furthermore, UPM won $\mathcal{CS'}$ in the large and heterogeneous \textit{Aggregate} dataset
(Fig. \ref{fig-f1-pr}i), since its $F1$ was 0.547, compared to the value of 0.391 which was
achieved by the latter method (that is, approximately 40\% higher). Apart from the \textit{CPUs}
dataset, the results of Jaccard index $\mathcal{J'}$ were slightly worse than those of
$\mathcal{CS'}$, consequently, UPM outperformed this metric by an even greater margin.

Regarding UPM+, in most cases, its effectiveness was very close to the one of UPM. Recall that
UPM+ attempts to improve the execution time of the algorithm by processing only the first $2K^{*}$
tokens of a product title, and by pruning the rest of them. For the datasets which contained
\textit{Mobile Phones} (Fig. \ref{fig-f1-pr}e) and \textit{TVs} (Fig. \ref{fig-f1-pr}g), the
two algorithms performed almost equally well. Additionally, for the \textit{Aggregate} dataset,
UPM is only 3.6\% more accurate than UPM+. These measurements verify the theoretical
foundation which predicted that only a portion of the words of the titles are important for the
identification of a product.

\begin{figure*}
\begin{tikzpicture}
\smaller
    % 1. AIR CONDITIONS
    \begin{axis}[
        xlabel={Similarity Threshold\\(a)},
        xlabel style={align=center},
        ylabel={F1 measure},
        xmin=0, xmax=1,
        ymin=0, ymax=1,
        xtick={0.1,0.2,0.3,0.4,0.5,0.6,0.7,0.8,0.9,1.0},
        ytick={0.1,0.2,0.3,0.4,0.5,0.6,0.7,0.8,0.9,1.0},
        ymajorgrids=true,
        grid style=dashed,
        legend style={
                at={(0.37,1.1)},anchor=south west,legend columns=-1,  /tikz/every even column/.append style={column sep=0.3cm}
        }
    ]
    \addplot[color=black, mark=*] coordinates { (0.1,0.673)(0.2,0.673)(0.3,0.673)(0.4,0.673)(0.5,0.673)(0.6,0.673)(0.7,0.673)(0.8,0.673)(0.9,0.673) };
    \addplot[color=orange, mark=o] coordinates { (0.1,0.667)(0.2,0.667)(0.3,0.667)(0.4,0.667)(0.5,0.667)(0.6,0.667)(0.7,0.667)(0.8,0.667)(0.9,0.667) };
    \addplot[color=green!50!black, mark=diamond] coordinates { (0.1,0.015)(0.2,0.030)(0.3,0.068)(0.4,0.147)(0.5,0.268)(0.6,0.372)(0.7,0.314)(0.8,0.170)(0.9,0.067) };
    \addplot[color=blue, mark=square] coordinates { (0.1,0.107)(0.2,0.259)(0.3,0.377)(0.4,0.448)(0.5,0.474)(0.6,0.443)(0.7,0.357)(0.8,0.221)(0.9,0.080) };
    \addplot[color=red, mark=triangle] coordinates { (0.1,0.027)(0.2,0.089)(0.3,0.238)(0.4,0.336)(0.5,0.323)(0.6,0.203)(0.7,0.124)(0.8,0.080)(0.9,0.039) };
    \addplot[color=gray!50!black, mark=oplus] coordinates { (0.1,0.225)(0.2,0.386)(0.3,0.458)(0.4,0.453)(0.5,0.406)(0.6,0.324)(0.7,0.223)(0.8,0.136)(0.9,0.061) };
    \smaller
    \legend{UPM, UPM+, Cosine $\mathcal{CS}$, Cosine-IDF $\mathcal{CS'}$, Jaccard $\mathcal{J}$, Jaccard-IDF $\mathcal{J'}$}
    \end{axis}

    % 2. CAR BATTERIES
    \begin{axis}[
        xshift=6cm,
        xlabel={Similarity Threshold\\(b)},
        xlabel style={align=center},
        ylabel={F1 measure},
        xmin=0, xmax=1,
        ymin=0, ymax=1,
        xtick={0.1,0.2,0.3,0.4,0.5,0.6,0.7,0.8,0.9,1.0},
        ytick={0.1,0.2,0.3,0.4,0.5,0.6,0.7,0.8,0.9,1.0},
        ymajorgrids=true,
        grid style=dashed,
    ]
    \addplot[color=black, mark=*] coordinates { (0.1,0.719)(0.2,0.719)(0.3,0.719)(0.4,0.719)(0.5,0.719)(0.6,0.719)(0.7,0.719)(0.8,0.719)(0.9,0.719) };
    \addplot[color=orange, mark=o] coordinates { (0.1,0.676)(0.2,0.676)(0.3,0.676)(0.4,0.676)(0.5,0.676)(0.6,0.676)(0.7,0.676)(0.8,0.676)(0.9,0.676) };
    \addplot[color=green!50!black, mark=diamond] coordinates { (0.1,0.003)(0.2,0.006)(0.3,0.015)(0.4,0.035)(0.5,0.113)(0.6,0.287)(0.7,0.376)(0.8,0.251)(0.9,0.110) };
    \addplot[color=blue, mark=square] coordinates { (0.1,0.066)(0.2,0.235)(0.3,0.538)(0.4,0.637)(0.5,0.585)(0.6,0.431)(0.7,0.254)(0.8,0.124)(0.9,0.032) };
    \addplot[color=red, mark=triangle] coordinates { (0.1,0.005)(0.2,0.018)(0.3,0.099)(0.4,0.226)(0.5,0.328)(0.6,0.311)(0.7,0.194)(0.8,0.115)(0.9,0.060) };
    \addplot[color=gray!50!black, mark=oplus] coordinates { (0.1,0.155)(0.2,0.499)(0.3,0.604)(0.4,0.535)(0.5,0.425)(0.6,0.271)(0.7,0.154)(0.8,0.086)(0.9,0.041) };
    \smaller
    \end{axis}

    % 3. COOKERS
    \begin{axis}[
        xshift=12cm,
        xlabel={Similarity Threshold\\(c)},
        xlabel style={align=center},
        ylabel={F1 measure},
        xmin=0, xmax=1,
        ymin=0, ymax=1,
        xtick={0.1,0.2,0.3,0.4,0.5,0.6,0.7,0.8,0.9,1.0},
        ytick={0.1,0.2,0.3,0.4,0.5,0.6,0.7,0.8,0.9,1.0},
        ymajorgrids=true,
        grid style=dashed
    ]
    \addplot[color=black, mark=*] coordinates { (0.1,0.761)(0.2,0.761)(0.3,0.761)(0.4,0.761)(0.5,0.761)(0.6,0.761)(0.7,0.761)(0.8,0.761)(0.9,0.761) };
    \addplot[color=orange, mark=o] coordinates { (0.1,0.732)(0.2,0.732)(0.3,0.732)(0.4,0.732)(0.5,0.732)(0.6,0.732)(0.7,0.732)(0.8,0.732)(0.9,0.732) };
    \addplot[color=green!50!black, mark=diamond] coordinates { (0.1,0.015)(0.2,0.028)(0.3,0.054)(0.4,0.101)(0.5,0.188)(0.6,0.375)(0.7,0.415)(0.8,0.294)(0.9,0.173) };
    \addplot[color=blue, mark=square] coordinates { (0.1,0.094)(0.2,0.293)(0.3,0.443)(0.4,0.533)(0.5,0.532)(0.6,0.462)(0.7,0.359)(0.8,0.233)(0.9,0.114) };
    \addplot[color=red, mark=triangle] coordinates { (0.1,0.025)(0.2,0.064)(0.3,0.175)(0.4,0.305)(0.5,0.370)(0.6,0.316)(0.7,0.244)(0.8,0.201)(0.9,0.131) };
    \addplot[color=gray!50!black, mark=oplus] coordinates { (0.1,0.194)(0.2,0.428)(0.3,0.520)(0.4,0.500)(0.5,0.436)(0.6,0.353)(0.7,0.268)(0.8,0.192)(0.9,0.103) };
    \smaller
    \end{axis}

    % 4. CPUS
    \begin{axis}[
       	yshift=-5.2cm,
        xlabel={Similarity Threshold\\(d)},
        xlabel style={align=center},
        ylabel={F1 measure},
        xmin=0, xmax=1,
        ymin=0, ymax=1,
        xtick={0.1,0.2,0.3,0.4,0.5,0.6,0.7,0.8,0.9,1.0},
        ytick={0.1,0.2,0.3,0.4,0.5,0.6,0.7,0.8,0.9,1.0},
        ymajorgrids=true,
        grid style=dashed
    ]
    \addplot[color=black, mark=*] coordinates { (0.1,0.932)(0.2,0.932)(0.3,0.932)(0.4,0.932)(0.5,0.932)(0.6,0.932)(0.7,0.932)(0.8,0.932)(0.9,0.932) };
    \addplot[color=orange, mark=o] coordinates { (0.1,0.823)(0.2,0.823)(0.3,0.823)(0.4,0.823)(0.5,0.823)(0.6,0.823)(0.7,0.823)(0.8,0.823)(0.9,0.823) };
    \addplot[color=green!50!black, mark=diamond] coordinates { (0.1,0.027)(0.2,0.050)(0.3,0.108)(0.4,0.220)(0.5,0.357)(0.6,0.444)(0.7,0.334)(0.8,0.178)(0.9,0.080) };
    \addplot[color=blue, mark=square] coordinates { (0.1,0.266)(0.2,0.594)(0.3,0.651)(0.4,0.519)(0.5,0.336)(0.6,0.172)(0.7,0.090)(0.8,0.039)(0.9,0.018) };
    \addplot[color=red, mark=triangle] coordinates { (0.1,0.045)(0.2,0.142)(0.3,0.353)(0.4,0.407)(0.5,0.356)(0.6,0.213)(0.7,0.140)(0.8,0.099)(0.9,0.044) };
    \addplot[color=gray!50!black, mark=oplus] coordinates { (0.1,0.509)(0.2,0.606)(0.3,0.415)(0.4,0.262)(0.5,0.160)(0.6,0.088)(0.7,0.065)(0.8,0.049)(0.9,0.039) };
    \smaller
    \end{axis}

    % 5. DIGITAL CAMERAS
    \begin{axis}[
        yshift=-5.2cm,
    	xshift=6cm,
        xlabel={Similarity Threshold\\(e)},
        xlabel style={align=center},
        ylabel={F1 measure},
        xmin=0, xmax=1,
        ymin=0, ymax=1,
        xtick={0.1,0.2,0.3,0.4,0.5,0.6,0.7,0.8,0.9,1.0},
        ytick={0.1,0.2,0.3,0.4,0.5,0.6,0.7,0.8,0.9,1.0},
        ymajorgrids=true,
        grid style=dashed
    ]
    \addplot[color=black, mark=*] coordinates { (0.1,0.603)(0.2,0.603)(0.3,0.603)(0.4,0.603)(0.5,0.603)(0.6,0.603)(0.7,0.603)(0.8,0.603)(0.9,0.603) };
    \addplot[color=orange, mark=o] coordinates { (0.1,0.580)(0.2,0.580)(0.3,0.580)(0.4,0.580)(0.5,0.580)(0.6,0.580)(0.7,0.580)(0.8,0.580)(0.9,0.580) };
    \addplot[color=green!50!black, mark=diamond] coordinates { (0.1,0.027)(0.2,0.058)(0.3,0.116)(0.4,0.203)(0.5,0.312)(0.6,0.397)(0.7,0.345)(0.8,0.221)(0.9,0.096) };
    \addplot[color=blue, mark=square] coordinates { (0.1,0.114)(0.2,0.284)(0.3,0.436)(0.4,0.513)(0.5,0.488)(0.6,0.389)(0.7,0.267)(0.8,0.149)(0.9,0.048) };
    \addplot[color=red, mark=triangle] coordinates { (0.1,0.052)(0.2,0.144)(0.3,0.296)(0.4,0.362)(0.5,0.360)(0.6,0.246)(0.7,0.167)(0.8,0.114)(0.9,0.072) };
    \addplot[color=gray!50!black, mark=oplus] coordinates { (0.1,0.236)(0.2,0.454)(0.3,0.514)(0.4,0.458)(0.5,0.265)(0.6,0.282)(0.7,0.201)(0.8,0.119)(0.9,0.057) };
    \smaller
    \end{axis}

    % 6. REFRIGERATORS
    \begin{axis}[
        yshift=-5.2cm,
        xshift=12cm,
        xlabel={Similarity Threshold\\(f)},
        xlabel style={align=center},
        ylabel={F1 measure},
        xmin=0, xmax=1,
        ymin=0, ymax=1,
        xtick={0.1,0.2,0.3,0.4,0.5,0.6,0.7,0.8,0.9,1.0},
        ytick={0.1,0.2,0.3,0.4,0.5,0.6,0.7,0.8,0.9,1.0},
        ymajorgrids=true,
        grid style=dashed,
    ]
    \addplot[color=black, mark=*] coordinates { (0.1,0.851)(0.2,0.851)(0.3,0.851)(0.4,0.851)(0.5,0.851)(0.6,0.851)(0.7,0.851)(0.8,0.851)(0.9,0.851) };
    \addplot[color=orange, mark=o] coordinates { (0.1,0.793)(0.2,0.793)(0.3,0.793)(0.4,0.793)(0.5,0.793)(0.6,0.793)(0.7,0.793)(0.8,0.793)(0.9,0.793) };
    \addplot[color=green!50!black, mark=diamond] coordinates { (0.1,0.010)(0.2,0.018)(0.3,0.041)(0.4,0.081)(0.5,0.163)(0.6,0.337)(0.7,0.422)(0.8,0.345)(0.9,0.180) };1186
    \addplot[color=blue, mark=square] coordinates { (0.1,0.070)(0.2,0.268)(0.3,0.461)(0.4,0.561)(0.5,0.568)(0.6,0.490)(0.7,0.372)(0.8,0.248)(0.9,0.139) };
    \addplot[color=red, mark=triangle] coordinates { (0.1,0.016)(0.2,0.049)(0.3,0.142)(0.4,0.271)(0.5,0.339)(0.6,0.355)(0.7,0.269)(0.8,0.215)(0.9,0.149) };
    \addplot[color=gray!50!black, mark=oplus] coordinates { (0.1,0.153)(0.2,0.447)(0.3,0.546)(0.4,0.516)(0.5,0.438)(0.6,0.354)(0.7,0.277)(0.8,0.205)(0.9,0.137) };
    \smaller
    \end{axis}

    % 7. TVs
    \begin{axis}[
        yshift=-10.4cm,
        xlabel={Similarity Threshold\\(g)},
        xlabel style={align=center},
        ylabel={F1 measure},
        xmin=0, xmax=1,
        ymin=0, ymax=1,
        xtick={0.1,0.2,0.3,0.4,0.5,0.6,0.7,0.8,0.9,1.0},
        ytick={0.1,0.2,0.3,0.4,0.5,0.6,0.7,0.8,0.9,1.0},
        ymajorgrids=true,
        grid style=dashed,
    ]
    \addplot[color=black, mark=*] coordinates { (0.1,0.710)(0.2,0.710)(0.3,0.710)(0.4,0.710)(0.5,0.710)(0.6,0.710)(0.7,0.710)(0.8,0.710)(0.9,0.710) };
    \addplot[color=orange, mark=o] coordinates { (0.1,0.625)(0.2,0.625)(0.3,0.625)(0.4,0.625)(0.5,0.625)(0.6,0.625)(0.7,0.625)(0.8,0.625)(0.9,0.625) };
    \addplot[color=green!50!black, mark=diamond] coordinates { (0.1,0.010)(0.2,0.014)(0.3,0.024)(0.4,0.042)(0.5,0.075)(0.6,0.147)(0.7,0.191)(0.8,0.153)(0.9,0.085) };
    \addplot[color=blue, mark=square] coordinates { (0.1,0.069)(0.2,0.218)(0.3,0.424)(0.4,0.510)(0.5,0.480)(0.6,0.398)(0.7,0.299)(0.8,0.205)(0.9,0.100) };
    \addplot[color=red, mark=triangle] coordinates { (0.1,0.013)(0.2,0.028)(0.3,0.061)(0.4,0.111)(0.5,0.151)(0.6,0.161)(0.7,0.119)(0.8,0.091)(0.9,0.070) };
    \addplot[color=gray!50!black, mark=oplus] coordinates { (0.1,0.160)(0.2,0.376)(0.3,0.463)(0.4,0.414)(0.5,0.340)(0.6,0.275)(0.7,0.208)(0.8,0.144)(0.9,0.085) };
    \smaller
    \end{axis}

    % 8. WATCHES
    \begin{axis}[
        yshift=-10.4cm,
        xshift=6cm,
        xlabel={Similarity Threshold\\(h)},
        xlabel style={align=center},
        ylabel={F1 measure},
        xmin=0, xmax=1,
        ymin=0, ymax=1,
        xtick={0.1,0.2,0.3,0.4,0.5,0.6,0.7,0.8,0.9,1.0},
        ytick={0.1,0.2,0.3,0.4,0.5,0.6,0.7,0.8,0.9,1.0},
        ymajorgrids=true,
        grid style=dashed,
    ]
    \addplot[color=black, mark=*] coordinates {(0.1,0.558)(0.2,0.558)(0.3,0.558)(0.4,0.558)(0.5,0.558)(0.6,0.558)(0.7,0.558)(0.8,0.558)(0.9,0.558)};
    \addplot[color=orange, mark=o] coordinates { (0.1,0.295)(0.2,0.295)(0.3,0.295)(0.4,0.295)(0.5,0.295)(0.6,0.295)(0.7,0.295)(0.8,0.295)(0.9,0.295) };
    \addplot[color=green!50!black, mark=diamond] coordinates { (0.1,0.001)(0.2,0.001)(0.3,0.002)(0.4,0.004)(0.5,0.012)(0.6,0.042)(0.7,0.104)(0.8,0.200)(0.9,0.178) };
    \addplot[color=blue, mark=square] coordinates { (0.1,0.005)(0.2,0.022)(0.3,0.080)(0.4,0.181)(0.5,0.276)(0.6,0.314)(0.7,0.249)(0.8,0.137)(0.9,0.049) };
    \addplot[color=red, mark=triangle] coordinates { (0.1,0.002)(0.2,0.005)(0.3,0.007)(0.4,0.025)(0.5,0.054)(0.6,0.134)(0.7,0.184)(0.8,0.202)(0.9,0.153) };
    \addplot[color=gray!50!black, mark=oplus] coordinates { (0.1,0.010)(0.2,0.050)(0.3,0.138)(0.4,0.218)(0.5,0.245)(0.6,0.222)(0.7,0.156)(0.8,0.092)(0.9,0.040) };
    \smaller
    \end{axis}

    % 9. AGGREGATE
    \begin{axis}[
        yshift=-10.4cm,
        xshift=12cm,
        xlabel={Similarity Threshold\\(i)},
        xlabel style={align=center},
        ylabel={F1 measure},
        xmin=0, xmax=1,
        ymin=0, ymax=1,
        xtick={0.1,0.2,0.3,0.4,0.5,0.6,0.7,0.8,0.9,1.0},
        ytick={0.1,0.2,0.3,0.4,0.5,0.6,0.7,0.8,0.9,1.0},
        ymajorgrids=true,
        grid style=dashed,
    ]
    \addplot[color=black, mark=*] coordinates { (0.1,0.649)(0.2,0.649)(0.3,0.649)(0.4,0.649)(0.5,0.649)(0.6,0.649)(0.7,0.649)(0.8,0.649)(0.9,0.649) };
    \addplot[color=orange, mark=o] coordinates { (0.1,0.481)(0.2,0.481)(0.3,0.481)(0.4,0.481)(0.5,0.481)(0.6,0.481)(0.7,0.481)(0.8,0.481)(0.9,0.481) };
    \addplot[color=green!50!black, mark=diamond] coordinates { (0.1,0.0)(0.2,0.0)(0.3,0.004)(0.4,0.009)(0.5,0.026)(0.6,0.084)(0.7,0.172)(0.8,0.220)(0.9,0.145) };
    \addplot[color=blue, mark=square] coordinates { (0.1,0.008)(0.2,0.039)(0.3,0.127)(0.4,0.268)(0.5,0.378)(0.6,0.393)(0.7,0.298)(0.8,0.161)(0.9,0.051) };
    \addplot[color=red, mark=triangle] coordinates { (0.1,0.0)(0.2,0.005)(0.3,0.016)(0.4,0.054)(0.5,0.102)(0.6,0.187)(0.7,0.190)(0.8,0.168)(0.9,0.115) };
    \addplot[color=gray!50!black, mark=oplus] coordinates { (0.1,0.049)(0.2,0.089)(0.3,0.218)(0.4,0.313)(0.5,0.321)(0.6,0.267)(0.7,0.182)(0.8,0.108)(0.9,0.046) };
    \smaller
    \end{axis}
\end{tikzpicture}
\caption{Performance comparison of various product matching methods for the Skroutz datasets: a)
air conditioners, b) car batteries, c) cookers \& ovens, d) CPUs, e) digital cameras, f)
refrigerators, g) TVs, h) watches, and i) aggregate dataset. }
\label{fig-f1-sk}
\vspace{-0.25cm}
\end{figure*}
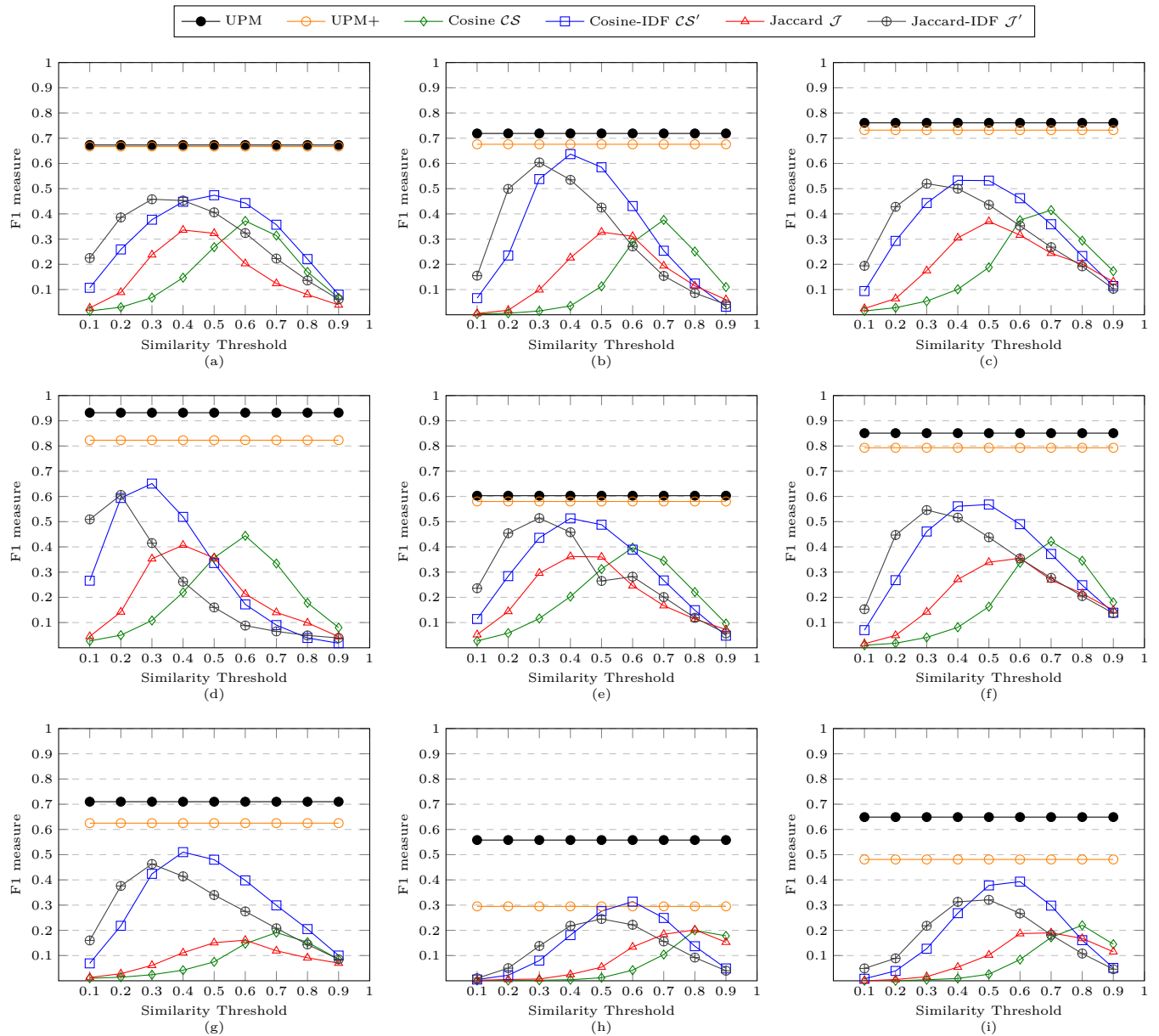

The results indicate the superiority of UPM and UPM+ against their adversary methods in
multiple types of products, and in the heterogeneous \textit{Aggregate} dataset. The situation is
improved even further in the datasets which originate from Skroutz (Fig. \ref{fig-f1-sk}).
In two cases, namely \textit{CPUs} (Fig. \ref{fig-f1-sk}d), and \textit{Refrigerators} (Fig.
\ref{fig-f1-sk}f), UPM approached 100\% precision, with $F1$ being equal to 0.932 and 0.851
respectively. In the same datasets, the effectiveness of $\mathcal{CS'}$ was 0.651 (-30\%) and
0.568 (-33\%) respectively, whereas $\mathcal{J'}$ achieved 0.606 (-35\%) for \textit{CPUs}, and
0.546 (-36\%) for \textit{Refrigerators}. Moreover, in the \textit{Aggregate} dataset, UPM outperformed $\mathcal{CS'}$ and $\mathcal{J'}$
by nearly 71\% and 104\% respectively.

Unlike the previous case, the performance of UPM+ was not so stable compared to UPM. In some
datasets the two methods achieve product matching of almost equal quality, such as \textit{Air
Conditioners} (Fig. \ref{fig-f1-sk}a) and \textit{Digital Cameras} (Fig. \ref{fig-f1-sk}e).
However, there are occasions where the difference is larger, like the cases of \textit{Watches}
(Fig. \ref{fig-f1-sk}h) and \textit{Aggregate} datasets (Fig. \ref{fig-f1-sk}i). Here UPM+ is
inferior to UPM, by 53\% and 26\% respectively. \textit{Watches} is the only case where UPM+
is defeated by cosine similarity, even marginally, by 6\%.

Apart from their superior effectiveness, the proposed algorithms are also parameter-free, whereas
the performance of the pairwise matching methods depends heavily on the selected similarity
threshold value. In particular, the maximum effectiveness of $\mathcal{CS'}$ was observed for four
different values of $\tau$: $\tau=0.3$ (e.g. \textit{CPUs} in Fig. \ref{fig-f1-sk}d), $\tau=0.4$
(e.g. \textit{Car Batteries} in Fig. \ref{fig-f1-sk}b), $\tau=0.5$ (e.g. \textit{Air Conditioners}
in Fig. \ref{fig-f1-sk}a), and $\tau=0.6$ (e.g. \textit{Watches} in Fig. \ref{fig-f1-sk}h). A
similar observation is also valid for the Jaccard index.

Four datasets, that is, \textit{CPUs, Digital Cameras, Refrigerators}, and \textit{TVs} have been
crawled from both product comparison platforms. The examination of all methods on these datasets
leads to the conclusion that the effectiveness does not primarily depend on the category itself.
Instead, it is rather affected by how accurately the vendors describe their products. For instance,
the $F1$ score of UPM for the \textit{CPUs} of PriceRunner and Skroutz was 0.579 and 0.932
respectively, a difference of about 61\%. On the contrary, this difference was only 6\% for the
\textit{Digital Cameras}. Moreover, UPM performed better on the \textit{Refrigerators} rather
than the \textit{CPUs} of PriceRunner, whereas the opposite occurred on the corresponding datasets
of Skroutz.

\subsection{Efficiency Evaluation}
\label{ssec-effic}

This subsection contains the experimental measurements of the efficiency of the proposed algorithm
in comparison with the aforementioned pairwise matching methods. In summary, the reported results
demonstrate that in contrast to the other methods, both UPM and UPM+ are fast enough to be
applied to all datasets, even to the larger ones.

Figure~\ref{fig-eff} depicts the running times (in seconds) of the six unsupervised product
matching methods which participate in our evaluation. More specifically, the two diagrams
illustrate the duration of the execution of these methods in the 9 datasets of PriceRunner (top
diagram), and the 9 datasets of Skroutz (bottom diagram). The vertical axis of time is in
logarithmic scale, to reliably display the large time differences between these executions.

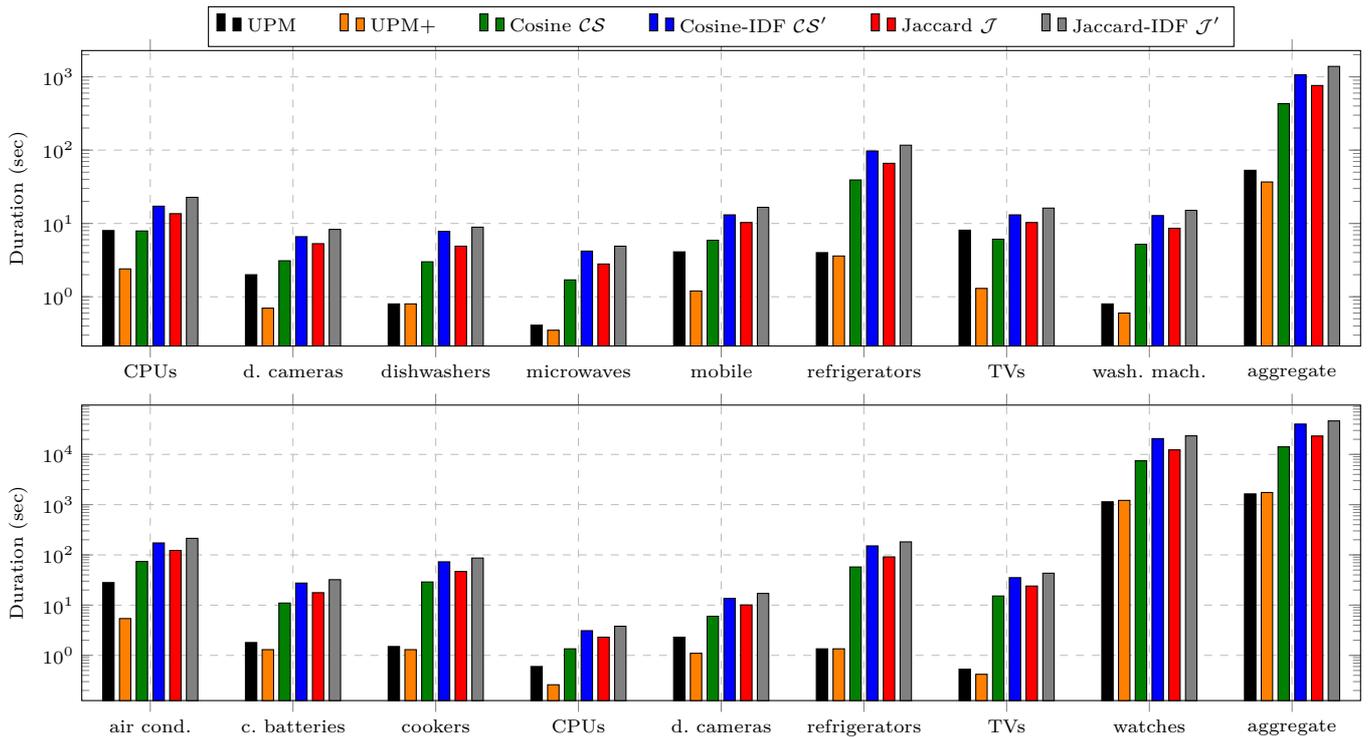
\begin{figure*}
\begin{tikzpicture}
\smaller
\begin{axis}[
    ybar,ymode = log, log origin=infty,
    height=5.5cm, width=18.4cm,
    bar width=0.15cm,
    enlargelimits=0.06,
    legend style={at={(0.5,1.15)}, anchor=north,legend columns=-1, /tikz/every even column/.append style={column sep=0.5cm}},
    ylabel={Duration (sec)},
    symbolic x coords={{CPUs},{d. cameras},{dishwashers},{microwaves},{mobile},{refrigerators},{TVs},{wash. mach.},{aggregate}},
    xtick=data,
    ymajorgrids=true,
    xmajorgrids=true,
    grid style=dashed,
    ]
% BAR 1. UPM
\addplot [fill=black] coordinates  {({CPUs},8.0) ({d. cameras},2) ({dishwashers},0.80) ({microwaves},0.41) ({mobile},4.1) ({refrigerators},4) ({TVs},8.1) ({wash. mach.},0.8) ({aggregate},53.1)};

% BAR 2. UPM+
\addplot [fill=orange] coordinates {({CPUs},2.4) ({d. cameras},0.7) ({dishwashers},0.8) ({microwaves},0.35) ({mobile},1.2) ({refrigerators},3.6) ({TVs},1.3) ({wash. mach.},0.6) ({aggregate},36.8)};

% BAR 3. cosine similarity
\addplot [fill=green!50!black] coordinates {({CPUs},7.9) ({d. cameras},3.1) ({dishwashers},3) ({microwaves},1.7) ({mobile},5.9) ({refrigerators},39.2) ({TVs},6.1) ({wash. mach.},5.2) ({aggregate},430.3)};

% BAR 4. cosine similarity with idf
\addplot [fill=blue] coordinates {({CPUs},17.2) ({d. cameras},6.6) ({dishwashers},7.8) ({microwaves},4.2) ({mobile},13.1) ({refrigerators},97.3) ({TVs},13.1) ({wash. mach.},12.8) ({aggregate},1064.9)};

% BAR 5. Jaccard similarity
\addplot [fill=red] coordinates {({CPUs},13.6) ({d. cameras},5.3) ({dishwashers},4.9) ({microwaves},2.8) ({mobile},10.3) ({refrigerators},66.1) ({TVs},10.3) ({wash. mach.},8.6) ({aggregate},761.2)};

% BAR 6. Jaccard similarity with idf
\addplot [fill=gray] coordinates {({CPUs},22.7) ({d. cameras},8.3) ({dishwashers},8.9) ({microwaves},4.9) ({mobile},16.6) ({refrigerators},116.7) ({TVs},16.2) ({wash. mach.},15.1) ({aggregate},1386.9)};

\legend{UPM, UPM+, Cosine $\mathcal{CS}$, Cosine-IDF $\mathcal{CS'}$, Jaccard $\mathcal{J}$, Jaccard-IDF $\mathcal{J'}$}
\end{axis}

\begin{axis}[
    ybar,ymode = log, log origin=infty,
    yshift=-4.7cm,
    height=5.5cm, width=18.4cm,
    bar width=0.15cm,
    enlargelimits=0.06,
    legend style={at={(0.5,1.15)}, anchor=north,legend columns=-1, /tikz/every even column/.append style={column sep=0.5cm}},
    ylabel={Duration (sec)},
    symbolic x coords={{air cond.},{c. batteries},{cookers},{CPUs},{d. cameras},{refrigerators},{TVs},{watches},{aggregate}},
    xtick=data,
    ymajorgrids=true,
    xmajorgrids=true,
    grid style=dashed,
    ]
% BAR 1. UPM
\addplot [fill=black] coordinates {({air cond.},28.1) ({c. batteries},1.8) ({cookers},1.5) ({CPUs},0.6) ({d. cameras},2.3) ({refrigerators},1.35) ({TVs},0.53) ({watches},1141) ({aggregate},1638)};

% BAR 2. UPM.2
\addplot [fill=orange] coordinates {({air cond.},5.4) ({c. batteries},1.3) ({cookers},1.3) ({CPUs},0.26) ({d. cameras},1.1) ({refrigerators},1.35) ({TVs},0.42) ({watches},1216) ({aggregate},1744)};

% BAR 3. cosine similarity
\addplot [fill=green!50!black] coordinates {({air cond.},74) ({c. batteries},11) ({cookers},28.9) ({CPUs},1.35) ({d. cameras},6) ({refrigerators},57.5) ({TVs},15.2) ({watches},7470) ({aggregate},14156)};

% BAR 4. cosine similarity with idf
\addplot [fill=blue] coordinates {({air cond.},173) ({c. batteries},27.5) ({cookers},72.8) ({CPUs},3.1) ({d. cameras},13.6) ({refrigerators},151.1) ({TVs},35.2) ({watches},20642) ({aggregate},40176)};

% BAR 5. Jaccard similarity
\addplot [fill=red] coordinates {({air cond.},122.7) ({c. batteries},17.7) ({cookers},46.7) ({CPUs},2.3) ({d. cameras},10.1) ({refrigerators},91.2) ({TVs},24) ({watches},12329) ({aggregate},23257)};

% BAR 6. Jaccard similarity with idf
\addplot [fill=gray] coordinates {({air cond.},213.5) ({c. batteries},32.1) ({cookers},86.5) ({CPUs},3.8) ({d. cameras},17.1) ({refrigerators},181.5) ({TVs},43.2) ({watches},23452) ({aggregate},46442)};

\end{axis}
\end{tikzpicture}
\caption{Running times (in seconds) of various product matching methods for the i) PriceRunner
(top), and ii) Skroutz (bottom) datasets. The vertical axis is in logarithmic scale, whereas the
legend on the top of the Figure is common for both diagrams.}
\label{fig-eff}
\end{figure*}

Regarding the PriceRunner datasets, UPM+ was the fastest method among its adversaries, whereas
the basic method, UPM, was ranked second. Notice that the larger the value of $K$ is, the greater
the performance gap becomes. This is anticipated, since a high $K$ value leads to a big number of
combinations to be extracted and scored. For instance, the average title length of
\textit{Dishwashers} was 7.591 (Table \ref{tab-data}), therefore, we set $K=3$ according to eq.
\ref{eq-K}. For such small values of $K$, UPM and UPM+ were equally fast (0.8 sec). On the
other hand, for \textit{CPUs}, where $K$ was equal to 5, UPM+ was more than 3 times faster than
UPM (2.4 vs 8 sec). Similarly, for \textit{TVs} where $K$ was also 5, UPM+ was 6.2 times
faster than UPM (1.3 vs 8.1 sec).

\begin{table}[b]
\begin{center}
\small
\vspace{-0.3cm}
\caption{Efficiency comparison of various methods on the \textit{Aggregate} datasets}
\label{tab-agg-ds}
\begin{tabular}{|l|c|c|c|c|} \hline
\multirow{2}{*}{\bf Method} & \multicolumn{2}{c|}{\bf PriceRunner Aggregate} & \multicolumn{2}{c|}{\bf Skroutz Aggregate} \\\cline{2-5}
                            & Time (sec) & Gain (x) & Time (sec) &  Gain (x)  \\\hline
UPM+                        &      37    &    -     &   1638    &     -      \\\hline
UPM                         &      53    &   1.43   &   1744    &     1.06   \\\hline
$\mathcal{CS}$              &     430    &  11.62   &  14156    &     8.64   \\\hline
$\mathcal{CS'}$             &    1065    &  28.78   &  40176    &    24.52   \\\hline
$\mathcal{J}$               &     761    &  20.56   &  23257    &    14.20   \\\hline
$\mathcal{J'}$              &    1387    &  37.48   &  46442    &    28.35   \\\hline
\end{tabular}
\end{center}
\end{table}

Both UPM and UPM+ were substantially faster than the plain string similarity metrics. Notice
that the larger the dataset is, the higher the performance gap becomes, due to the quadratic
complexity of the pairwise matching procedure. The slowest methods were the ones which were the
strongest opponents in terms of matching quality, that is, $\mathcal{CS'}$ and $\mathcal{J'}$. For
instance, in the \textit{CPUs} dataset UPM was 2.2 and 2.8 times faster than $\mathcal{CS'}$ and
$\mathcal{J'}$ respectively, whereas, UPM+ outperformed these metrics by 7.2 and 9.5 times. We
will shortly discuss the \textit{Aggregate} dataset.

The efficiency measurements were also positive for our proposed algorithms in the datasets which
originated from Skroutz. Hence, in the case of \textit{Watches}, UPM and UPM+ consumed equal
times, and they were faster than $\mathcal{CS}$, $\mathcal{CS'}$, $\mathcal{J}$ and $\mathcal{J'}$
by 6.5, 18.1, 10.8, and 20.5 times respectively. Remarkably, in some datasets such as \textit{TVs},
\textit{Refrigerators}, and \textit{Cookers \& Ovens}, our algorithms were faster than the pairwise
methods by more or less than two orders of magnitude.

Finally, Table \ref{tab-agg-ds} presents the execution times and the efficiency differences of
the six examined methods on the two \textit{Aggregate} datasets from PriceRunner and Skroutz.
According to Table \ref{tab-data} and eq. \ref{eq-K}, the first dataset was processed with
$K=4$, whereas the second with $K=3$. Consequently, UPM+ achieved better times in the first case
and it was faster than $\mathcal{CS}$, $\mathcal{CS'}$, $\mathcal{J}$ and $\mathcal{J'}$ by roughly
11.6, 28.8, 20.6 and 37.5 times respectively. The corresponding performance gaps in the
\textit{Aggregate} dataset of Skroutz were also very high, approximating 8.6, 24.5, 14.2 and 28.4
times respectively.

\section{Conclusions}
\label{sec-conclusion}

In this paper we introduced UPM, a clustering-based unsupervised algorithm for matching product
titles from different data sources. This problem is particularly important for the e-commerce
industry since it facilitates the comparison of product features and prices. UPM implements
multiple novel elements, the most important of which are:

\begin{itemize}
\item{it does not perform pairwise comparison of the titles, thus, it avoids the quadratic
complexity of this procedure. Instead, it achieves matching by groupping the titles of identical
products into clusters,}
\item{it partially identifies the semantics of the title words,}
\item{it includes a post-processing verification stage which corrects the erroneous matchings by
moving products through clusters and by creating new clusters.}
\end{itemize}

In addition, we introduced UPM+, a variant which prunes the titles and processes only a
portion of their words.

The exhaustive experimental evaluation of UPM and UPM+ on 18 datasets from two product
comparison platforms demonstrated their superiority over the traditional pairwise matching methods.
More specifically in terms of matching quality, our method outperformed 4 similarity metrics by a
margin of up to 84\%. Furthermore, it was about 24--37 times faster than the pairwise matching
methods in large datasets. In some cases, the performance was improved by more than two
orders of magnitude.

%\begin{itemize}
%\item{in terms of matching quality, our method outperformed 4 similarity metrics by a margin of up
%to 84\%,}
%\item{in terms of efficiency, and in large datasets, it was about 24--37 times faster
%than the pairwise matching methods. In some cases, the performance was improved by more than two
%orders of magnitude.}
%\end{itemize}

% Non-BibTeX users please use
\bibliographystyle{spmpsci}
\bibliography{air}

\begin{thebibliography}{10}
\providecommand{\url}[1]{{#1}}
\providecommand{\urlprefix}{URL }
\expandafter\ifx\csname urlstyle\endcsname\relax
  \providecommand{\doi}[1]{DOI~\discretionary{}{}{}#1}\else
  \providecommand{\doi}{DOI~\discretionary{}{}{}\begingroup
  \urlstyle{rm}\Url}\fi

\bibitem{inista2018}
Akritidis, L., Bozanis, P.: {Effective Unsupervised Matching of Product Titles
  with k-Combinations and Permutations}.
\newblock In: Proceedings of the 14th IEEE International Conference on
  Innovations in Intelligent Systems and Applications (INISTA), pp. 1--10
  (2018)

\bibitem{lcs2012}
B{\"a}r, D., Biemann, C., Gurevych, I., Zesch, T.: {UKP: Computing Semantic
  Textual Similarity by Combining Multiple Content Similarity Measures}.
\newblock In: Proceedings of the 1st Joint Conference on Lexical and
  Computational Semantics, pp. 435--440 (2012)

\bibitem{sigkdd2003}
Bilenko, M., Mooney, R.J.: Adaptive {D}uplicate {D}etection using {L}earnable
  {S}tring {S}imilarity {M}easures.
\newblock In: Proccedings of ACM SIGKDD, pp. 39--48 (2003)

\bibitem{sigmod2003}
Chaudhuri, S., Ganjam, K., Ganti, V., Motwani, R.: Robust and {E}fficient
  {F}uzzy {M}atch for {O}nline {D}ata {C}leaning.
\newblock In: Proccedings of ACM SIGMOD, pp. 313--324 (2003)

\bibitem{febrl2008}
Christen, P.: {FEBRL}: a {F}reely {A}vailable {R}ecord {L}inkage {S}ystem with
  a {G}raphical {U}ser {I}nterface.
\newblock In: the 2nd Australasian Workshop on Health Data and Knowledge
  Management, pp. 17--25 (2008)

\bibitem{tkde2007}
Elmagarmid, A.K., Ipeirotis, P.G., Verykios, V.S.: Duplicate {R}ecord
  {D}etection: a {S}urvey.
\newblock IEEE Transactions on Knowledge and Data Engineering \textbf{19}(1),
  1--16 (2007)

\bibitem{ijca2013}
Gomaa, W.H., Fahmy, A.A.: {A Survey of Text Similarity Approaches}.
\newblock International Journal of Computer Applications \textbf{68}(13) (2013)

\bibitem{cikm2012}
Gopalakrishnan, V., Iyengar, S.P., Madaan, A., Rastogi, R., Sengamedu, S.:
  Matching {P}roduct {T}itles using {W}eb-based {E}nrichment.
\newblock In: Proceedings of ACM CIKM, pp. 605--614 (2012)

\bibitem{tkdd2008}
Islam, A., Inkpen, D.: {Semantic Text Similarity using Corpus-Based Word
  Similarity and String Similarity}.
\newblock ACM Transactions on Knowledge Discovery from Data (TKDD)
  \textbf{2}(2), 10 (2008)

\bibitem{vldb2014}
Londhe, N., Gopalakrishnan, V., Zhang, A., Ngo, H.Q., Srihari, R.: Matching
  {T}itles with {C}ross {T}itle {W}eb-search {E}nrichment and {C}ommunity
  {D}etection.
\newblock Proceedings of the VLDB Endowment \textbf{7}(12), 1167--1178 (2014)

\bibitem{sigmod2013}
Lu, J., Lin, C., Wang, W., Li, C., Wang, H.: {String Similarity Measures and
  Joins with Synonyms}.
\newblock In: Proceedings of ACM SIGMOD, pp. 373--384 (2013)

\bibitem{iexml2005}
Lu, W., Robertson, S., MacFarlane, A.: {Field-weighted XML Retrieval Based on
  BM25}.
\newblock In: International Workshop of the Initiative for the Evaluation of
  XML Retrieval, pp. 161--171 (2005)

\bibitem{irintro2008}
Manning, C.D., Raghavan, P., Sch{\"u}tze, H.: Introduction to {I}nformation
  {R}etrieval.
\newblock Cabridge University Press (2008)

\bibitem{icde2007}
Shen, W., DeRose, P., Vu, L., Doan, A., Ramakrishnan, R.: Source-{A}ware
  {E}ntity {M}atching: a {C}ompositional {A}pproach.
\newblock In: Proceedings of ICDE, pp. 196--205 (2007)

\bibitem{vldb2011}
Wang, J., Li, G., Yu, J.X., Feng, J.: {Entity Matching: How Similar is
  Similar}.
\newblock Proceedings of the VLDB Endowment \textbf{4}(10), 622--633 (2011)

\end{thebibliography}

\end{document}